\definecolor{mypink}{rgb}{.99,.91,.95}
\newtheorem{Theorem}{Theorem}%[section]
\newtheorem{Lem}{Lemma}%
\newtheorem{remark}{Remark}
\def\BibTeX{{\rm B\kern-.05em{\sc i\kern-.025em b}\kern-.08em T\kern-.1667em\lower.7ex\hbox{E}\kern-.125emX}}
\begin{document}
\title{Communication and Localization with Extremely Large Lens Antenna Array}

\author{Jie~Yang, Yong~Zeng,
	Shi~Jin, Chao-Kai~Wen, Pingping~Xu
	% <-this % stops a space
%	\thanks{A shorter version of this paper will appear in \cite{conference}.}
	\thanks{Jie~Yang, Yong~Zeng, Shi~Jin, and Pingping~Xu are with the National Mobile Communications Research Laboratory, Southeast University, Nanjing, China (e-mail: \{yangjie;yong\_zeng;jinshi;xpp\}@seu.edu.cn). Chao-Kai~Wen is with the Institute of Communications Engineering, National Sun Yat-sen University, Kaohsiung, 804, Taiwan (e-mail: chaokai.wen@mail.nsysu.edu.tw). }
	%\thanks{}
	%\thanks{}% <-this % stops a space
%	%\thanks{Manuscript received April 19, 2005; revised August 26, 2015.}
}

\maketitle	
\vspace{-1cm}
\begin{abstract}

Achieving high-rate communication with accurate localization and wireless environment sensing has emerged as an important trend of beyond-fifth and sixth generation cellular systems. Extension of the antenna array to an extremely large scale is a potential technology for achieving such goals. However, the super massive operating antennas significantly increases the computational complexity of the system. Motivated by the inherent advantages of lens antenna arrays in reducing system complexity, we consider communication and localization problems with an \uline{ex}tremely large \uline{lens} antenna array, which we call ``ExLens''. Since radiative near-field property emerges in the setting, we derive the closed-form array response of the lens antenna array with spherical wave, which includes the array response obtained on the basis of uniform plane wave as a special case. Our derivation result reveals a window effect for energy focusing property of ExLens, which indicates that ExLens has great potential in position sensing and multi-user communication.
We also propose an effective method for location and channel parameters estimation, which is able to achieve the localization performance close to the Cram\'{e}r-Rao lower bound.
Finally, we examine the multi-user communication performance of ExLens that serves coexisting near-field and far-field users. Numerical results demonstrate the effectiveness of the proposed channel estimation method and show that ExLens with a minimum mean square error receiver achieves significant spectral efficiency gains and complexity-and-cost reductions compared with a uniform linear array.

\end{abstract}

\begin{IEEEkeywords}
Array response, extremely large lens antenna array, localization, millimeter-wave communications, spherical wave-front.  
\end{IEEEkeywords}

\section{Introduction}
\begin{spacing}{1.55} 
In comparison with previous generations, 
the fifth generation (5G) mobile network is a major breakthrough because of the introduce of massive multiple-input multiple-output (MIMO), millimeter-wave (mmWave), and ultra-dense network \cite{5G,5G1}.
However, realizing the full vision of supporting Internet of Everything services to connect billions of people and machines remains a challenge for 5G. 
Thus, research communities worldwide have implemented initiatives to conceive the next-generation (e.g., the sixth generation (6G)) mobile communication systems \cite{6G0,6G3,6G4,6G,6G1}.
The requirement of various applications, such as extended reality, autonomous systems, pervasive health monitoring, and brain computer interactions,
are driving the evolution of 6G towards a more intelligent and software reconfigurable functionality paradigm
that can provide ubiquitous communications and also sense, control, and even optimize wireless environments.

To fulfill the visions of 6G for high throughput, massive connectivity, ultra-reliability, and ultra-low latency, 
on the one hand,
mmWave and Tera-Hertz (THz) frequencies will be exploited further,
furthermore, multiple frequency bands (e.g., microwave/mmWave/THz frequencies) must be integrated to provide seamless connectivity \cite{thz};
on the other hand,
the antenna deployment will evolve towards larger apertures and greater numbers, furthermore, the extremely large aperture array has been proposed to boost spatial diversity further \cite{elaa,subarray1,hy}.
Moreover, intelligent reflecting surfaces or reconfigurable intelligent surfaces, artificial intelligence, and integrated terrestrial-aerial-satellite networks are regarded as promising technologies towards 6G\cite{wu,han,tang,uav,uav2}.
However, many open problems need to be solved to reap the full benefits of the aforementioned techniques.
In particular, when the antenna dimension continues to increase, the range of the radiative near-field of the antenna array expands, and the user equipment (UE) and significant scatterers are likely to be located in the near-field of the array.
Consequently, the prominent uniform plane wave assumption will no longer hold for extremely large antenna arrays \cite{spherical}.
Moreover, the use of thousands or more active antenna elements will generate prohibitive cost in terms of hardware implementation, energy consumption, and signal processing complexity \cite{perre}.

The radiation field of an antenna array is divided into the near-field region and the far-field region via the Rayleigh distance \cite{rayleigh0,rayleigh}, which is given as 
$
R = {2D^2}/{\lambda},
$
where $D$ is the maximum dimension of the antenna array, and $\lambda$ is the wavelength. 
When the distance between the UE (or scatterer) and the base station (BS) is smaller than the Rayleigh distance, the UE (or scatterer) is located in the near-field region, where the spherical wave-front over the antenna array is observed.
For example, a uniform linear array (ULA) of 1 meter (m) that operates at $30$ GHz corresponds to a Rayleigh distance of approximately $200\,m$ and nullifies the uniform plane wave-front model usually assumed in prior research on wireless communications. 
Few works have considered the near-field
property for modeling and analyzing massive MIMO channels 
by proposing the ULA response vector \cite{zz} and analyzing the channel estimation performance \cite{hy} under the spherical wave assumption.
The spherical wave-front is also proven to provide an underlying generic parametric model for estimating the position of UE and scatterers \cite{ULA2,yxf}.
Several works start to investigate the localization potential with large advanced antenna arrays to realize the vision of multi-purpose services for 6G (joint communication, control, localization, and sensing)
\cite{lis,henk,loclens,loclens1,loclens2,loclens3}
in addition to communication capabilities.  
Concentrated and distributed large antenna arrays are compared in \cite{lis} in terms of localization performance.
The theoretical uplink localization and synchronization performance is analyzed in \cite{henk} for ULA illuminated by spherical waves.
Parameter-based localization methods are developed in  \cite{loclens,loclens1,loclens2} for lens antenna arrays in the far-field, 
\cite{loclens3} considers direct localization by utilizing the near-field property, and provides a coarse localization accuracy.

An effective solution to significantly reduce the system complexity and implementation cost caused by the large number of antennas and UE is to partition the antenna array into a few disjoint subarrays \cite{subarray1,subarray2}.
In this work, we propose an alternative solution by using the energy focusing property of an \uline{ex}tremely large \uline{lens} antenna array denoted as ``ExLens'', which can fully utilize the aperture offered by the large antenna arrays. 
Recent studies have confirmed that the signal processing complexity and radio frequency (RF) chain cost could be significantly reduced without notable performance degradation for mmWave and massive MIMO systems by utilizing lens antenna arrays \cite{lens1,lens2,zy1,zy2,zy3}.
Electromagnetic (EM) lenses can provide variable phase shifting for EM rays at different points on the lens aperture to achieve angle-dependent energy focusing property.
Therefore, lens antenna arrays can transform the signal from the antenna space to the beamspace (the latter has lower dimensions) to reduce the RF chains significantly.
In \cite{zy2,zy3}, the array responses of lens antenna arrays have been derived in closed-form as a ``sinc" function of the angle of arrival (AOA)/angle of departure (AOD) of the impinging/departure signals. 
However, existing research on the lens antenna arrays are limited to the far-field assumption.
To the best of the authors' knowledge,
the array response of an ExLens for the general spherical wave-front has not been reported in prior works, let alone conducting a study on 
multi-user communication with an ExLens in the coexistence of  near-field and far-field UE.

In this study,
we explore the property of ExLens illuminated by spherical waves, including the capabilities of localization and multi-user communication,
on the basis of the inherent localization information carried by spherical waves and
the great potentials of lens antenna arrays in reducing system complexity.	
In summary,
we derive a closed-form array response of an ExLens, based on which we develop an effective method to obtain location parameters together with channel gains.
On the one hand, we can realize localization with the estimated location parameters.
On the other hand, we can design data transmission 
with the reconstructed channel.
Our main contributions are presented as follows:
\begin{itemize}
	\item \textbf{Array Response}:
	We first derive the closed-form expression for the array response of ExLens by considering the general spherical wave-front for two different EM lens designs,
	and then reveal that the obtained array response (derived based on the spherical wave assumption) includes the ``sinc''- type array response\cite{zy2} (derived based on the uniform plane wave assumption) as a special case.
    Next, we analyze differences of the energy focusing characteristics of ExLens illuminated by the spherical and plane wave-fronts.
	The window focusing property in the near-field of ExLens shows its great potential for position sensing and multi-user communication.
	The approximation error of the derived closed-form array response is verified ignorable.

	\item \textbf{Position Sensing}:
	We analyze the uplink localization ability of an ExLens equipped at the BS.
	We first study the theoretical localization performance from 
	a Fisher information perspective and confirm that the localization performance improves 	 
	as the aperture of the lens antenna array increases.
    By exploring the energy focusing window of ExLens, we propose an effective parameterized estimation method to obtain 
	location parameters together with channel gains. 
	Thus, localization can be performed by directly reusing the communication signals.
	Comprehensive simulations show that	the localization performance of the proposed method is close to the Cram\'{e}r-Rao lower bound (CRLB) and the channel can also be effectively reconstructed.

	\item \textbf{Multi-user Communication}:
	We investigate the multi-user communication performance of ExLens with coexisting near-field and far-field UE and scatterers.
	Power-based antenna selection is applied to ExLens to reduce the number of RF chains, together with the maximal ratio combining (MRC)- and minimum mean square error (MMSE)-based combining schemes to maximize the sum-rate.	
	The multi-user
	communication performance of the ExLens with perfect and estimated channel state information (CSIs) are compared.	
	Simulation results verify the effectiveness of the proposed channel estimation method and show that the proposed ExLens with an MMSE 
    receiver achieves significant spectral efficiency gains and complexity-and-cost reductions compared with the benchmark ULA schemes, when serving coexisting near-field and far-field UE. 
	
\end{itemize}

The rest of this paper is organized as follows: In Section \uppercase\expandafter{\romannumeral2}, we introduce an ExLens mmWave system model and derive the closed-form expression of ExLens array response. The property of ExLens array response is explained in Section \uppercase\expandafter{\romannumeral3}. In Section \uppercase\expandafter{\romannumeral4}, we explore the localization capbility of ExLens and propose an effective method to obtain location parameters together with channel gains. In Section \uppercase\expandafter{\romannumeral5}, we analyze the multi-user communication performance of ExLens. Our simulation results are presented in Section \uppercase\expandafter{\romannumeral6}. We conclude the paper in Section \uppercase\expandafter{\romannumeral7}.

{\bf Notations}---In this paper, upper- and lower-case bold letters denote matrices and vectors, respectively. For a matrix $\mathbf{A}$, $\mathbf{A}^{-1}$, $\mathbf{A}^{\text{T}}$, and $\mathbf{A}^{\text{H}}$ represent inverse, transpose, and Hermitian operators, respectively.
$\mbox{blkdiag}(\mathbf{A}_1,\ldots,\mathbf{A}_k)$ denotes a block-diagonal matrix constructed
by $\mathbf{A}_1,\ldots,\mathbf{A}_k$.
For a vector $\mathbf{a}$, the L$_2$-norm is signified by $\|\mathbf{a}\|$.
For a complex value $c$,
the module is represented by $|c|$ and the real part is denoted by $\mathcal{R}\{c\}$.
For a real number $a$, $\lfloor a \rfloor$ denotes the largest integer that is not greater than $a$.
sinc$(\cdot)$ is the ``sinc" function defined as sinc$(x)=\sin(\pi x)/(\pi x)$.
$\mathbb{E}\{\cdot\}$ indicates the statistical expectation.
\end{spacing}

\vspace{-0.3cm}
\section{System Model}\label{system}
We consider a BS equipped with an ExLens in the two-dimensional coordinate system (Fig. \ref{fig:model}).
The EM lens is placed on the y-axis with physical length $D_y$ and is centered at the origin.
%We assume that the thickness of the lens can be ignored.
The antenna elements are placed on the focal arc, which is defined as a semi-circle around the center of the EM lens with radius $F$. 
As the aperture of an antenna array further increases,
UE and significant scatterers are likely to be located in the near-field of the array, where the uniform plane wave-front assumption no longer holds. 
Therefore, we consider the more general spherical wave-front, which leads to more novel phase design of the EM lens and has greater energy focus on the lens antenna array compared with plane wave-front \cite{zy2}.

\begin{figure}
	\vspace{-0.3cm}
	\centering
	\hspace{-1.7cm}
	\subfigure[ExLens]
	{
		\begin{minipage}[t]{0.6\textwidth}
			\centering
			\includegraphics[scale=0.5,angle=0]{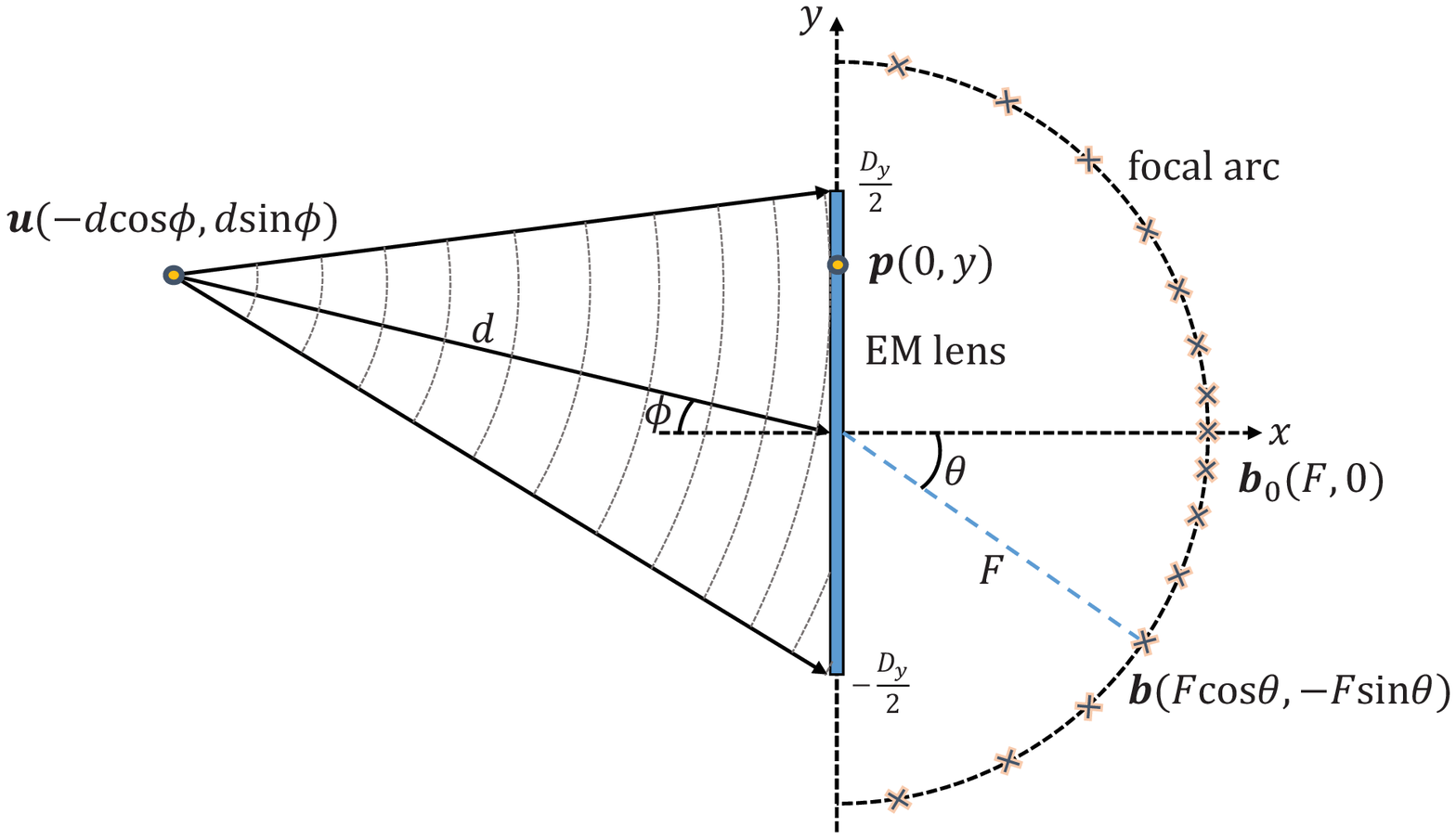}	\label{fig:model}
		\end{minipage}
	}
    \hspace{-1cm}	
	\subfigure[ULA]
	{
		\begin{minipage}[t]{0.36\textwidth}
			\centering
			\includegraphics[scale=0.5,angle=0]{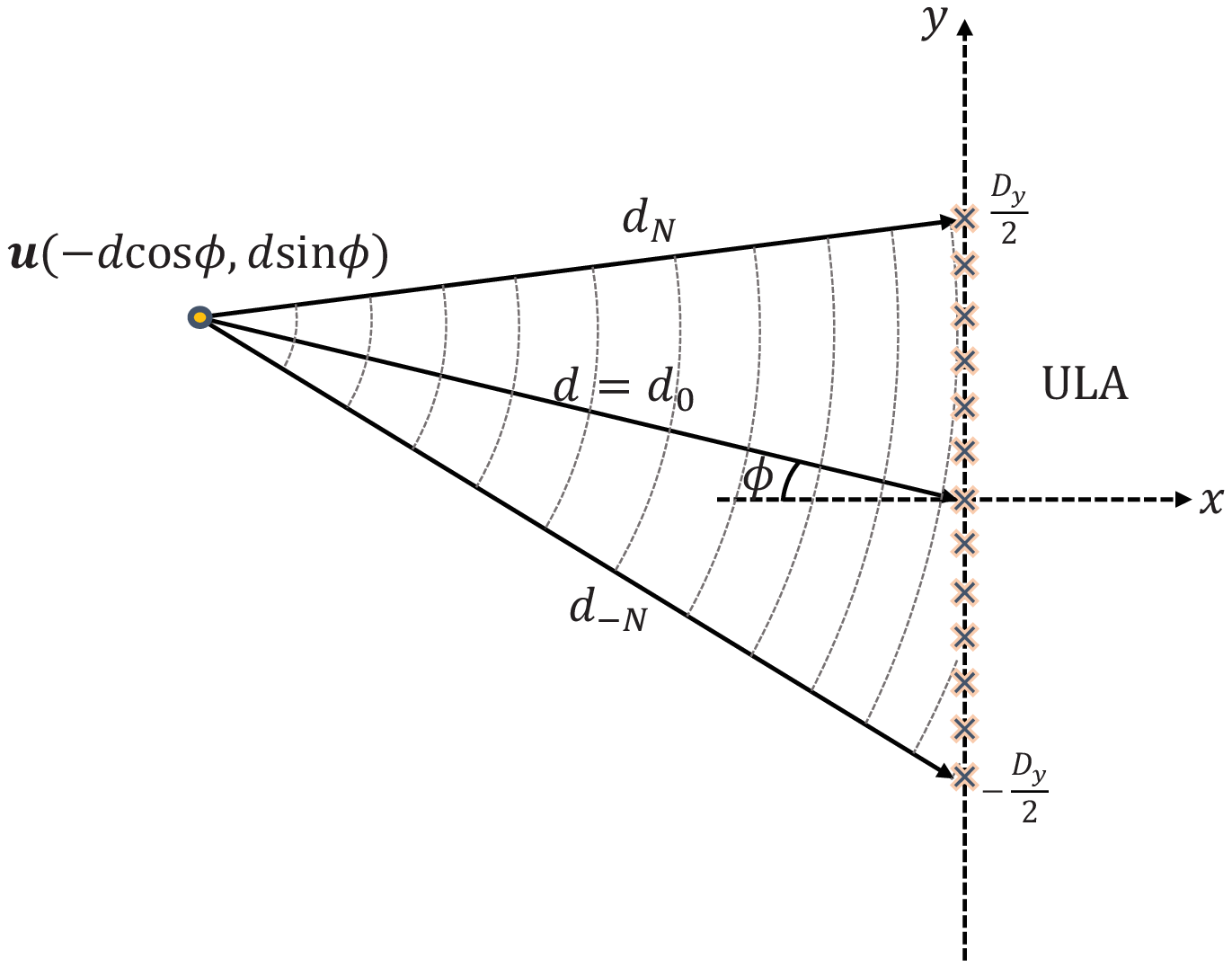}\label{fig:ULA}
		\end{minipage}
	}
	
	\captionsetup{font=footnotesize}
	\caption{Different antenna arrays illuminated by spherical wave-front.}
	\vspace{-0.7cm}
\end{figure}

We first investigate the receive array response by assuming that ExLens is illuminated by a spherical wave-front emitted from a UE located at $\mathbf{u}= [-d\cos\phi,d\sin\phi]$, where $d$ is the distance between the UE and the center of the EM lens, and $\phi \in (-{\pi}/{2},{\pi}/{2})$ is the angle of the UE relative to the x-axis (Fig. \ref{fig:model}). \footnote{
	We assume that the signal source is in front of the lens antenna array (i.e., it is located at the opposite side of the EM lens with the array elements). 
	This assumption practically holds because BSs apply sectorization technique. Each antenna array serves one sector in practice to cover the range of $60^\circ$ to $120^\circ$. Multiple lens antenna arrays can be combined to cover a range of $360^\circ$.
	}
For simplicity, we assume that the UE is equipped with an omni-directional antenna, and is regarded as a point source. 
The signal transmitted by the UE is assumed to be $1$, and the signal arrived at any point $\mathbf{p}=[0,y]$ on the EM lens aperture is given by \cite{zz,lis}
\vspace{-0.2cm}
\begin{equation}\label{s}
s(\mathbf{u},\mathbf{p}) = \eta(\mathbf{u},\mathbf{p})e^{-jk_0\parallel \mathbf{u} - \mathbf{p}\parallel},
\vspace{-0.4cm}
\end{equation}
where $k_0={2\pi}/{\lambda}$ is the wave number that 
corresponds to the signal wavelength $\lambda$, and $\eta(\mathbf{u},\mathbf{p})= {\lambda}/{(4\pi\!\parallel \!\mathbf{u} - \mathbf{p}\!\parallel})$ corresponds to the free space path loss from point $\mathbf{u}$ to point $\mathbf{p}$.
We define $\theta \in (-{\pi}/{2},{\pi}/{2})$, where $\theta$ is positive below the x-axis and negative above the x-axis (Fig. \ref{fig:model}).
The received signal $r(\theta,d,\phi)$ at any point $\mathbf{b}=[F\cos\theta,-F\sin\theta]$ at the focal arc \footnote{{The case that the center of the EM lens and the focal arc are not coinciding is left for future investigation.} } can be expressed as 
	\vspace{-0.2cm}
\begin{equation}\label{r}
r(\theta, d, \phi) = \int\limits_{ - {{{D_y}} \mathord{\left/
			{\vphantom {{{D_y}} 2}} \right.
			\kern-\nulldelimiterspace} 2}}^{{{{D_y}} \mathord{\left/
			{\vphantom {{{D_y}} 2}} \right.
			\kern-\nulldelimiterspace} 2}} {s(\mathbf{u},\mathbf{p})\kappa(\mathbf{p},\mathbf{b})e^{-j\varphi(\mathbf{p},\mathbf{b})}} dy = \int\limits_{ - {{{D_y}} \mathord{\left/
			{\vphantom {{{D_y}} 2}} \right.
			\kern-\nulldelimiterspace} 2}}^{{{{D_y}} \mathord{\left/
			{\vphantom {{{D_y}} 2}} \right.
			\kern-\nulldelimiterspace} 2}} { \eta(\mathbf{u},\mathbf{p})e^{-jk_0\parallel \mathbf{u} - \mathbf{p}\parallel}\kappa(\mathbf{p},\mathbf{b})e^{-j\varphi(\mathbf{p},\mathbf{b})}} dy.
		\vspace{-0.2cm}
\end{equation}
where
$\mathbf{p}$ is a function of $y$, $\mathbf{u}$ is a function of $(d, \phi)$, and $\mathbf{b}$ is a function of $\theta$; 
$\kappa(\mathbf{p},\mathbf{b})={\lambda}/{(4\pi\!\parallel \!\mathbf{p} - \mathbf{b}\!\parallel})$ accounts for the free-space path loss from point $\mathbf{p}$ on the EM lens to point $\mathbf{b}$ on the focal arc;
$\varphi(\mathbf{p},\mathbf{b})=\psi(\mathbf{p})+k_0||\mathbf{p}-\mathbf{b}||$, and $\psi(\mathbf{p})$ is the fixed phase shift determined by the EM lens design.
Therefore, $\varphi(\mathbf{p},\mathbf{b})$ is the total phase shift of the signal by the EM lens and the propagation delay between EM lens and focal arc. 
Eq. \eqref{r} follows the principle of linear superposition of signals.

\begin{figure}
	\centering
	\vspace{-0.5 cm}
	%\hspace{-0.8 cm}
	\subfigure[ Design 1: Incident plane wave-front perpendicular to the lens surface converges at the focal point, where ${\psi}(\mathbf{p}) = {\phi _0} - {k_0}\lVert \mathbf{p}-{\mathbf{b}_0}\rVert$.]
	{
		\begin{minipage}[t]{0.44\textwidth}
			\centering
			\includegraphics[scale=0.55,angle=0]{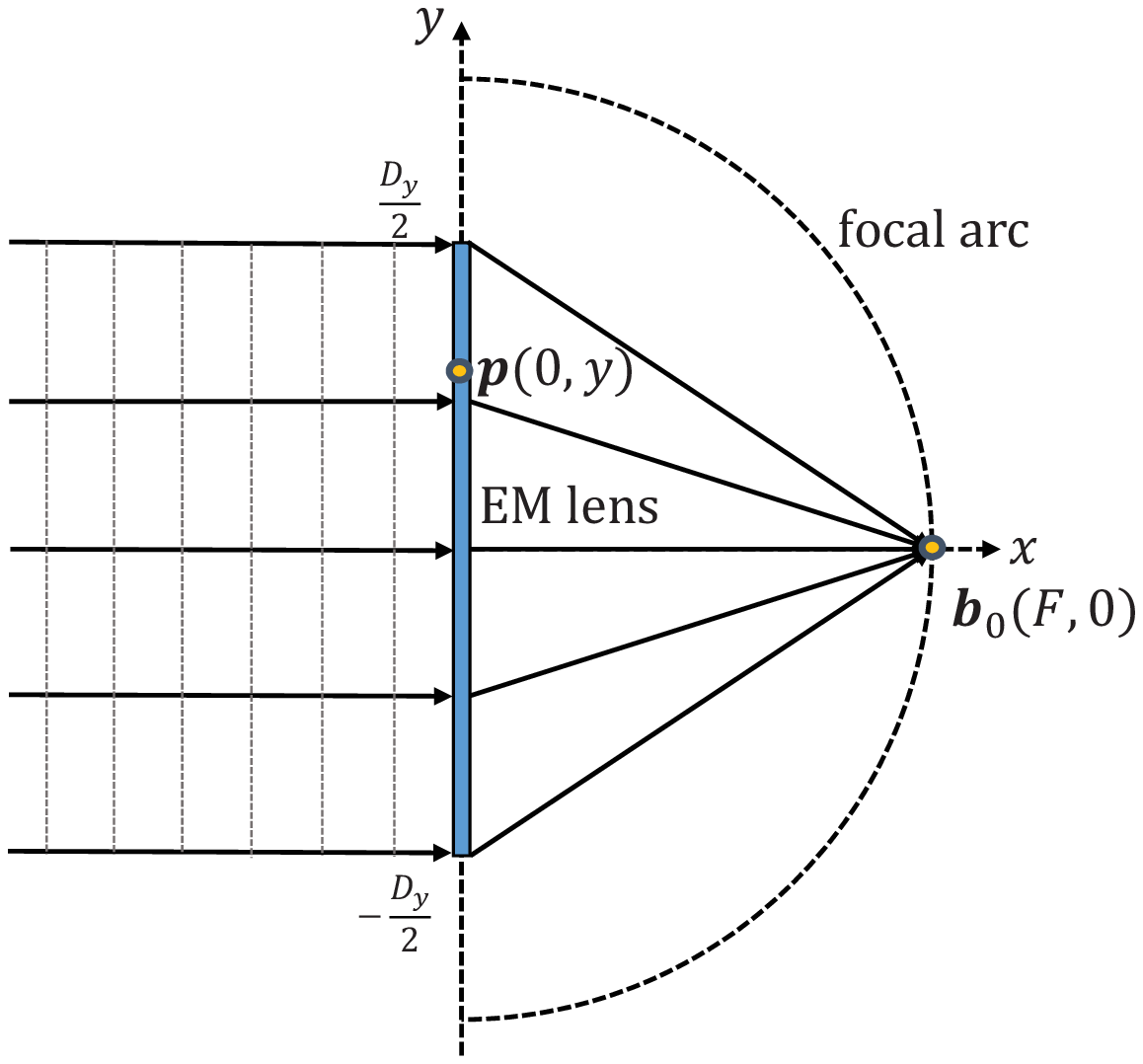}	\label{fig:d1}
		\end{minipage}
	}
	\hspace{0.25cm}	
	\subfigure[Design 2: Incident spherical wave-front from the left focal point converges at the right focal point, where ${\psi}(\mathbf{p}) = {\phi _0} - {k_0}(\lVert{\mathbf{c}_0}-\mathbf{p}\rVert + \lVert \mathbf{p}-\mathbf{b}_0\rVert)$.]
	{
		\begin{minipage}[t]{0.44\textwidth}
			\centering
			\includegraphics[scale=0.55,angle=0]{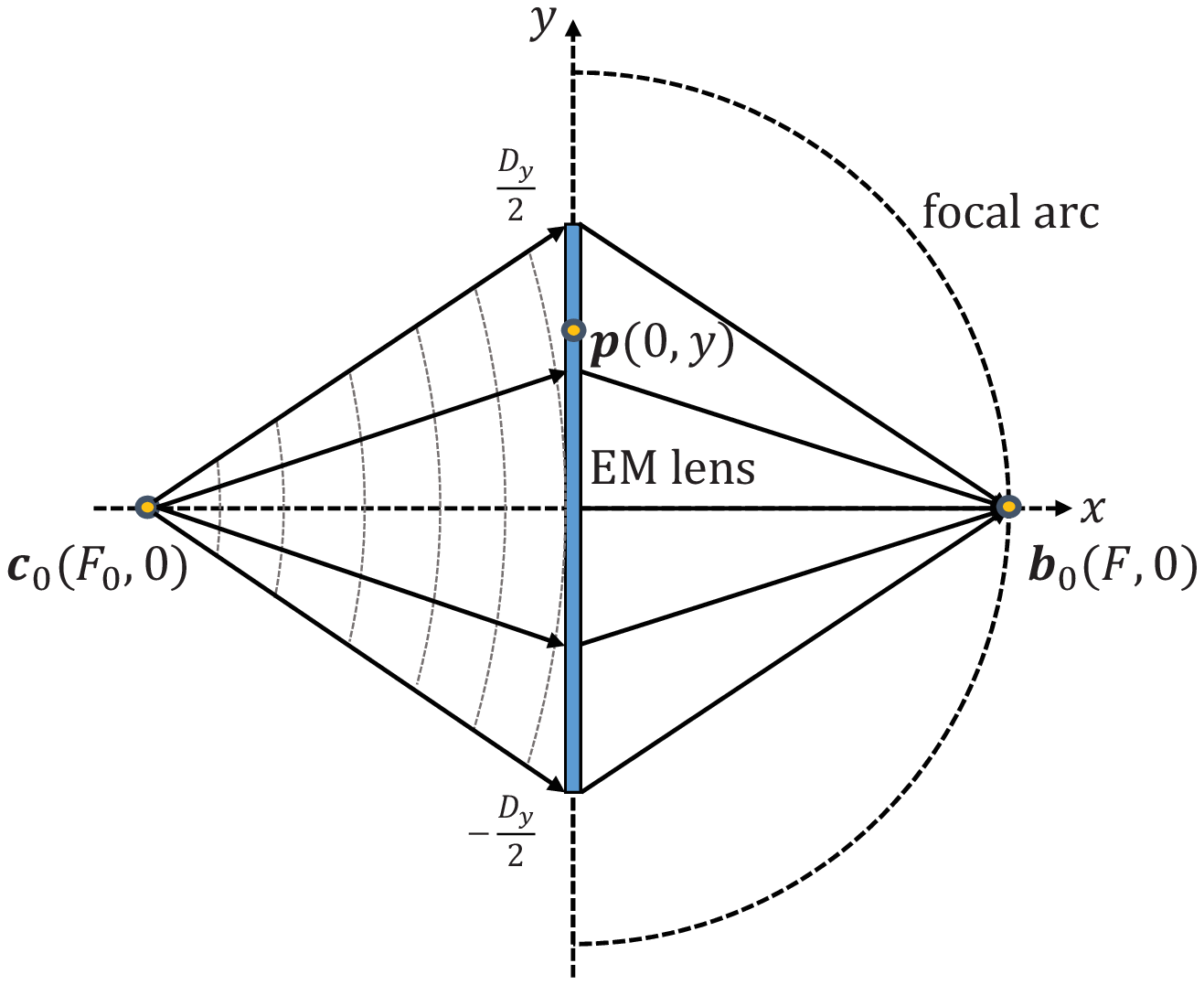}\label{fig:d2}
		\end{minipage}
	}
	
	\captionsetup{font=footnotesize}
	\caption{Two design approaches for the EM lens.}\label{lens}
	\vspace{-0.8cm}
\end{figure}

We first review the fundamental principle of operation for EM lenses:
the EM lenses are similar to optical lenses, which can alter the propagation directions of the EM rays to achieve energy focusing or beam collimation.
The function of EM lens can be effectively realized by appropriately designing $\psi(\mathbf{p})$ in $\varphi(\mathbf{p},\mathbf{b})$ in \eqref{r}.
In this study, we consider two different EM lens designs (Fig. \ref{lens}).
In Design 1, where incident plane wave-front perpendicular to the lens surface converges at the focal point $\mathbf{b}_0=[F,0]$ (Fig. \ref{fig:d1}), we have 
%\vspace{-0.25cm}
%\begin{equation}\label{p}
${\phi _0} = {\psi}(\mathbf{p}) + {k_0}\lVert \mathbf{p}-{\mathbf{b}_0}\rVert$,
%\vspace{-0.25cm}
%\end{equation}
where the constant $\phi_0$ is the arrived signal phase at the focal point $\mathbf{b}_0$.
Hence, we obtain
\vspace{-0.5cm}
\begin{equation}\label{pp}
{\psi}(\mathbf{p}) = {\phi _0} - {k_0}\lVert \mathbf{p}-{\mathbf{b}_0}\rVert.
\vspace{-0.4cm}
\end{equation}
The total phase shift for a signal from point $\mathbf{p}$ to point $\mathbf{b}$ is given by
\vspace{-0.25cm}
\begin{equation}\label{p1}
\begin{array}{l}
\varphi (\mathbf{p},\mathbf{b})= {\psi}(\mathbf{p}) + {k_0}\lVert \mathbf{p}-\mathbf{b}\rVert= {\phi _0} + {k_0}\left( \lVert \mathbf{p}-\mathbf{b}\rVert - \lVert \mathbf{p}-\mathbf{b}_0\rVert \right).
\end{array}
\vspace{-0.25cm}
\end{equation}
In Design 2, where the incident spherical wave-front from point $\mathbf{c}_0=[F_0,0]$ converges at the focal point $\mathbf{b}_0=[F,0]$ (Fig. \ref{fig:d2}), we have
%\vspace{-0.25cm}
%\begin{equation}\label{p2}
${\phi _0} = {k_0}\lVert{\mathbf{c}_0}-\mathbf{p}\rVert + {\psi}(\mathbf{p}) + {k_0}\lVert \mathbf{p}-\mathbf{b}_0\rVert$.
%\vspace{-0.25cm}
%\end{equation}
Then, we obtain the following:
\vspace{-0.25cm}
\begin{equation}\label{pp2}
{\psi}(\mathbf{p}) = {\phi _0} - {k_0}(\lVert{\mathbf{c}_0}-\mathbf{p}\rVert + \lVert \mathbf{p}-\mathbf{b}_0\rVert).
\vspace{-0.5cm}
\end{equation}
We also obtain the total phase shift as follows:
\vspace{-0.5cm}
\begin{equation}\label{p3}
\begin{array}{l}
\varphi (\mathbf{p},\mathbf{b}) = {\psi}(\mathbf{p}) + {k_0}\lVert \mathbf{p}-\mathbf{b}\rVert = {\phi _0} + {k_0}(\lVert \mathbf{p}-\mathbf{b}\rVert - \lVert \mathbf{p}-{\mathbf{b}_0}\rVert - \lVert{\mathbf{c}_0}-\mathbf{p}\rVert).  
\end{array}
\vspace{-0.25cm}
\end{equation}
Design 1 can be regarded as a special case of Design 2 with $F_0 \rightarrow \infty $.
The EM lens is designed according to \eqref{pp} (Design 1) or \eqref{pp2} (Design 2) and works in the spherical wave-front scenarios (Fig. \ref{fig:model}).
Then, we define the response on point $\mathbf{b}=[F\cos\theta,-F\sin\theta]$ at the focal arc as 
\vspace{-0.25cm}
\begin{equation}\label{define}
a(\theta,d,\phi)={16\pi^2 F d}/({{\lambda^2 {e^{ - j{k_0 d}}} }}) \times r(\theta,d,\phi), 
\vspace{-0.25cm}
\end{equation}
the closed-form expression of which can be obtained in the following theorem.

\vspace{-0.3cm}
\begin{Theorem}\label{t1}
	When illuminated by a spherical wave-front, with the assumption $d,F\gg D_y$, the array response of ExLens on any point $\mathbf{b}=[F\cos\theta,-F\sin\theta]$ at the focal arc can be approximated as
	\vspace{-0.15cm}
	\begin{equation}\label{response}
	a(\theta,d,\phi)\approx\frac{{\sqrt \pi  }}{{{\rm{2}}\sqrt \alpha }}{e^{ - j\left( {\frac{{{{ {\pi^2 \beta^2} }}}}{{\alpha}} - \frac{5\pi }{4}} \right)}}\left( {\mathrm{erf}\left( {\frac{{\alpha{D_y} + 2\pi \beta}}{{2\sqrt \alpha }}{e^{j\frac{{3\pi }}{4}}}} \right) + \mathrm{erf}\left( {\frac{{\alpha{D_y} - 2\pi \beta}}{{2\sqrt \alpha }}{e^{j\frac{{3\pi }}{4}}}} \right)} \right),
	\vspace{-0.15cm}
	\end{equation}
	where
	\vspace{-0.15cm}
	\begin{equation}\label{erf}
	\mathrm{erf}\left( x \right) = \frac{2}{{\sqrt \pi  }}\int\limits_0^x {{e^{ - {t^2}}}dt},
	\vspace{-0.15cm}
	\end{equation}
	$\beta=(\sin \theta  - \sin \phi) /{\lambda }$ and $\alpha$ for the two different lens designs is given in Table \ref{af}.	
		\begin{table}[htbp]
		%\footnotesize
		\centering
		%\captionsetup{font=small}
		\caption{Parameter $\alpha$ for different lens designs.}\label{af}
		\begin{tabular}{|c|c|c|}
			\hline
			\hline
			& Design 1 & Design 2 \\
			\hline
			$\alpha$ & $\dfrac{{\pi {{\sin }^{\rm{2}}}\theta }}{{\lambda F}} - \dfrac{{\pi {{\cos }^{\rm{2}}}\phi }}{{\lambda d}}$ & $\dfrac{{\pi {{\sin }^{\rm{2}}}\theta }}{{\lambda F}} - \dfrac{{\pi {{\cos }^{\rm{2}}}\phi }}{{\lambda d}} + \dfrac{\pi }{{\lambda {F_0}}}$ \\
			\hline	
		\end{tabular}
	\end{table}	
\end{Theorem}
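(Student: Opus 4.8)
\emph{Proof sketch.} The plan is to substitute the explicit point coordinates into the Fresnel-type integral \eqref{r}, use the paraxial regime $d,F\gg D_y$ to reduce the integrand's phase to a quadratic polynomial in the aperture variable $y$, and then evaluate the resulting Fresnel integral in closed form through the complex-argument error function.

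First I would write out the Euclidean distances appearing in the integrand. With $\mathbf{u}=[-d\cos\phi,d\sin\phi]$, $\mathbf{p}=[0,y]$, $\mathbf{b}=[F\cos\theta,-F\sin\theta]$, $\mathbf{b}_0=[F,0]$ and $\mathbf{c}_0=[F_0,0]$ one has $\lVert\mathbf{u}-\mathbf{p}\rVert^{2}=d^{2}-2dy\sin\phi+y^{2}$, $\lVert\mathbf{p}-\mathbf{b}\rVert^{2}=F^{2}+2Fy\sin\theta+y^{2}$, $\lVert\mathbf{p}-\mathbf{b}_0\rVert^{2}=F^{2}+y^{2}$ and $\lVert\mathbf{c}_0-\mathbf{p}\rVert^{2}=F_0^{2}+y^{2}$. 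Since $|y|\le D_y/2$ is much smaller than $d,F,F_0$, I would Taylor-expand each square root, keeping the linear and quadratic terms in $y$ (the Fresnel order) and arguing that the discarded cubic remainder contributes only a phase of order $k_0D_y^{3}/d^{2}$, which is negligible; this gives $\lVert\mathbf{u}-\mathbf{p}\rVert\approx d-y\sin\phi+\tfrac{y^{2}\cos^{2}\phi}{2d}$, $\lVert\mathbf{p}-\mathbf{b}\rVert\approx F+y\sin\theta+\tfrac{y^{2}\cos^{2}\theta}{2F}$, $\lVert\mathbf{p}-\mathbf{b}_0\rVert\approx F+\tfrac{y^{2}}{2F}$ and $\lVert\mathbf{c}_0-\mathbf{p}\rVert\approx F_0+\tfrac{y^{2}}{2F_0}$. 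For the amplitude I would note $\eta(\mathbf{u},\mathbf{p})\kappa(\mathbf{p},\mathbf{b})=\lambda^{2}/(16\pi^{2}\lVert\mathbf{u}-\mathbf{p}\rVert\,\lVert\mathbf{p}-\mathbf{b}\rVert)\approx\lambda^{2}/(16\pi^{2}dF)$, so the normalization $16\pi^{2}Fd/\lambda^{2}$ in \eqref{define} makes it unity.

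Next I would assemble the phase. Inserting the expansions into $\varphi(\mathbf{p},\mathbf{b})$ of \eqref{p1} (Design 1) or \eqref{p3} (Design 2), and adding $-k_0\lVert\mathbf{u}-\mathbf{p}\rVert$ together with the $e^{jk_0d}$ factor from \eqref{define}, the $y$-independent terms cancel up to a constant reference phase (absorbed into $\phi_0$), the coefficient of $y$ collapses to $-2\pi\beta$ with $\beta=(\sin\theta-\sin\phi)/\lambda$, and the coefficient of $y^{2}$ becomes $\alpha$: for Design 1 this is $\tfrac{\pi\sin^{2}\theta}{\lambda F}-\tfrac{\pi\cos^{2}\phi}{\lambda d}$, and for Design 2 the term $\lVert\mathbf{c}_0-\mathbf{p}\rVert$ adds $\tfrac{\pi}{\lambda F_0}$, reproducing Table \ref{af}. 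Thus $a(\theta,d,\phi)\approx\int_{-D_y/2}^{D_y/2}e^{j(\alpha y^{2}-2\pi\beta y)}\,dy$. I would complete the square, $\alpha y^{2}-2\pi\beta y=\alpha(y-\pi\beta/\alpha)^{2}-\pi^{2}\beta^{2}/\alpha$, shift $w=y-\pi\beta/\alpha$, and reduce $\int e^{j\alpha w^{2}}\,dw$ to $\mathrm{erf}$ via $t=\sqrt{\alpha}\,w\,e^{j3\pi/4}$ (so $t^{2}=-j\alpha w^{2}$), which is legitimate from the definition of $\mathrm{erf}$ for complex arguments (equivalently, Cauchy's theorem applied to the entire function $e^{-t^{2}}$). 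Evaluating at the endpoints $\pm D_y/2-\pi\beta/\alpha$, using the oddness of $\mathrm{erf}$ to turn the lower-limit term into the ``$+$'' term of \eqref{response}, and rewriting $e^{-j3\pi/4}=e^{j5\pi/4}$, produces \eqref{response}; letting $F_0\to\infty$ then specializes Design 2 to Design 1.

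I expect the main obstacle to be the phase bookkeeping: one must verify, separately for the two lens designs, that after inserting $\psi(\mathbf{p})$ the constant-in-$y$ contributions genuinely cancel and that the quadratic coefficient is precisely the $\alpha$ of Table \ref{af}, while also confirming that the retained quadratic (near-field) term dominates the dropped cubic remainder under $d,F\gg D_y$. A secondary technical point is the contour rotation in the Fresnel-to-$\mathrm{erf}$ step, which should be phrased so that $\sqrt{\alpha}$ denotes the appropriate complex square root, so that \eqref{response} remains valid by analytic continuation irrespective of the sign of $\alpha$.
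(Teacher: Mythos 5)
Your proposal is correct and follows essentially the same route as the paper's Appendix A: second-order (Fresnel/paraxial) Taylor expansion of the distances under $d,F,F_0\gg |y|$, amplitude approximation $\eta\kappa\approx\lambda^2/(16\pi^2 dF)$ cancelled by the normalization in \eqref{define}, identification of the linear and quadratic phase coefficients $-2\pi\beta$ and $\alpha$ for each design, and evaluation of $\int_{-D_y/2}^{D_y/2}e^{j\alpha y^2-j2\pi\beta y}\,dy$ via completing the square and the complex-argument $\mathrm{erf}$. The only differences are presentational: you expand each square root separately rather than their difference, and you spell out the Gaussian-to-$\mathrm{erf}$ contour substitution that the paper states without derivation (Eq. \eqref{a8}), which, like the paper's choice of the constant $\phi_0$, does not change the argument.
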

\vspace{-1cm}
\begin{proof}
	Please refer to Appendix \ref{A}.
\end{proof}
According to Theorem \ref{t1}, the parameter $\alpha$ of Design 2 reduces to that of Design 1 when $F_0\to \infty$, as $\lim \limits_{F_0 \to \infty}{\pi }/(\lambda {F_0})= 0$ in Table \ref{af}.
This again shows that Design 1 can be regarded as a special case of Design 2 with $F_0 \rightarrow \infty $.
The energy focusing property of ExLens is determined by the item ${\mathrm{erf}\left( {\frac{{\alpha{D_y} + 2\pi \beta}}{{2\sqrt \alpha }}{e^{j\frac{{3\pi }}{4}}}} \right) + \mathrm{erf}\left( {\frac{{\alpha{D_y} - 2\pi \beta}}{{2\sqrt \alpha }}{e^{j\frac{{3\pi }}{4}}}} \right)}$ in \eqref{response} and is further analyzed in Section \ref{property}.
The parameter $\theta$ in Theorem \ref{t1} is a continuous value, whereas $\theta$ should be sampled for a particular antenna placement.
Here, we assume that $N_a = 2 \lfloor\tilde{D}_y\rfloor + 1$ antenna elements are placed on the focal arc of the EM lens \cite{zy2}, where $\tilde{D}_y={D_y}/{\lambda}$ denotes the electrical length of the EM lens.
For notational convenience, $N_a$ is assumed to be an odd number.
Let $\theta_n$ signify the angle of the $n$-th antenna element relative to the x-axis, where $n \in \{0,\pm 1, \ldots,\pm N\}$ and $N={(N_a-1)}/{2}$.
The deployment of the antenna elements obeys the rule
$\sin \theta_n = {n}/{N}$.
Therefore, the array response of the $n$-th antenna element located at point $\mathbf{b}_n=[F\cos\theta_n,-F\sin\theta_n]$ accroding to  \eqref{response} can be expressed as 
\vspace{-0.35cm}
\begin{equation}\label{sv}
a_n(d,\phi)\approx\frac{{\sqrt \pi  }}{{{\rm{2}}\sqrt \alpha }}e^{ - j\left( {\frac{{{{ {\pi^2 \beta^2} }}}}{{\alpha}} - \frac{5\pi }{4}} \right)}\left( {\mathrm{erf}\left( {\frac{{\alpha{D_y} + 2\pi \beta}}{{2\sqrt \alpha }}{e^{j\frac{{3\pi }}{4}}}} \right) + \mathrm{erf}\left( {\frac{{\alpha{D_y} - 2\pi \beta}}{{2\sqrt \alpha }}{e^{j\frac{{3\pi }}{4}}}} \right)} \right),
\vspace{-0.15cm}
\end{equation}
where $\sin \theta$ in $\alpha$ and $\beta$ is replaced by $\sin \theta_n = {n}/{N}$.
With the $n$-th element given in \eqref{sv}, the antenna array response vector $\mathbf{a}(d,\phi) \in \mathbb{C}^{N_a\times 1}$ can be obtained accordingly.

%\vspace{-0.2cm}
\section{Property of ExLens Array Response}\label{property}
In this section, we analyze the relationship and differences between the array responses of the lens antenna array illuminated by spherical wave-fronts (near-field scenarios) and plane wave-fronts (far-field scenarios).
Before entering the in-depth comparison, we review the array response of the lens antenna array illuminated by plane wave-fronts.
The array response of a lens antenna array
for an element located at the focal arc with angle $\theta$ and illuminated by a uniform plane wave with AOA $\phi$ is given by \cite{zy2}
\vspace{-0.5cm}
\begin{equation}\label{plane}
a(\theta, \phi) ={D_y}\mathrm{sinc}\left( {\tilde{D}_y\sin \theta  - \tilde{D}_y\sin \phi } \right).
\vspace{-0.45cm}
\end{equation}
In the far-field scenarios, the array response follows the ``{\rm sinc}" function as given in \eqref{plane}.
For any incident/departure signal from/to a particular direction $\phi$, only those antennas located near the focal point would receive/steer significant power. Notably, the focal point reflects the information of $\phi$, whereas the information of $d$ cannot be reflected from \eqref{plane}.
Furthermore, the angle resolution of the lens antenna array is determined by the width of the main lobe of the ``{\rm sinc}" function, which is $2/\tilde{D}_y$.
When $\tilde{D}_y$ increases, the main lobe becomes narrower such that other multi-paths can be resolved in the spatial domain.

%\vspace{-0.5cm}
\subsection{Generality Analysis}
We reveal the relationship between the array responses of the lens antenna array illuminated by the spherical wave-front in \eqref{response} and plane wave-front in \eqref{plane}.
The following lemma shows that \eqref{plane} is a special case of the derived ExLens array response \eqref{response}.
 
\vspace{-0.3cm}
\begin{Lem}\label{lem1}
	When $d$ and $F_0$ (for Design 2) increase to infinite and in which the spherical wave-front reduces to the plane wave-front,
	the array response given in \eqref{response} converges to \eqref{plane} as
	\vspace{-0.25cm}
	\begin{equation}\label{sv1}
	%\small
	\hspace{-0.3cm}
	\begin{array}{ll}
	\lim \limits_{d, F_0 \to \infty}\! \frac{{\sqrt \pi  }}{{{\rm{2}}\sqrt \alpha }}e^{ - j\left( {\frac{{{{ {\pi^2 \beta^2} }}}}{{\alpha}} - \frac{5\pi }{4}} \right)}\!\!\left( {\mathrm{erf}\left( {\frac{{\alpha{D_y} + 2\pi \beta}}{{2\sqrt \alpha }}{e^{j\frac{{3\pi }}{4}}}} \right)\!\! + \mathrm{erf}\left( {\frac{{\alpha{D_y} - 2\pi \beta}}{{2\sqrt \alpha }}{e^{j\frac{{3\pi }}{4}}}} \right)} \right)
	\!=\!{D_y}\mathrm{sinc}\left( {\tilde{D}_y\sin \theta  - \tilde{D}_y\sin \phi } \right).
	\end{array}
	\vspace{-0.25cm}
	\end{equation}	
\end{Lem}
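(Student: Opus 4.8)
The plan is to pass to the limit $d\to\infty$ (and, for Design~2, $F_0\to\infty$) directly in the closed form of Theorem~\ref{t1}. The first step is pure bookkeeping: by Table~\ref{af}, in this limit the quadratic-phase coefficient $\alpha\to 0$ --- which is precisely the statement that the wavefront curvature across the aperture flattens out, the spherical wave degenerating to a plane wave --- while $\beta=(\sin\theta-\sin\phi)/\lambda$ is left untouched. Hence the left-hand side of \eqref{sv1} is exactly the $\alpha\to 0^{+}$ limit of the right-hand side of \eqref{response}, and the content of the lemma is to show that this limit equals the ``$\mathrm{sinc}$''-response \eqref{plane}.

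The cleanest way to take that limit is to step one line back in the derivation of Theorem~\ref{t1}. Recall that under $d,F\gg D_y$ the aperture integral behind \eqref{r}--\eqref{define} reduces, up to a constant prefactor, to the Fresnel-type integral $\int_{-D_y/2}^{D_y/2}e^{j(\alpha y^{2}-2\pi\beta y)}\,dy$ --- this is a routine second-order expansion of $\lVert\mathbf{u}-\mathbf{p}\rVert$, $\lVert\mathbf{p}-\mathbf{b}\rVert$ and $\lVert\mathbf{p}-\mathbf{b}_0\rVert$ in $y$, and is the starting point of the proof of Theorem~\ref{t1} in Appendix~\ref{A} --- and \eqref{response} follows from it by completing the square $\alpha y^{2}-2\pi\beta y=\alpha(y-\pi\beta/\alpha)^{2}-\pi^{2}\beta^{2}/\alpha$, substituting $t=\sqrt{\alpha}\,e^{-j\pi/4}(y-\pi\beta/\alpha)$, and applying the definition \eqref{erf}; the front factors $\tfrac{\sqrt\pi}{2\sqrt\alpha}$ and $e^{-j(\pi^{2}\beta^{2}/\alpha-5\pi/4)}$ are nothing but the Jacobian and the completing-the-square / change-of-variables phases. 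Reading this chain backwards rewrites the left-hand side of \eqref{sv1} as that same Fresnel integral, and the limit is then immediate: on the compact interval $[-D_y/2,D_y/2]$ one has $\lvert e^{j\alpha y^{2}}-1\rvert\le\alpha D_y^{2}/4$, so $e^{j\alpha y^{2}}\to 1$ uniformly and
\[
\lim_{\alpha\to 0^{+}}\int_{-D_y/2}^{D_y/2}e^{j(\alpha y^{2}-2\pi\beta y)}\,dy=\int_{-D_y/2}^{D_y/2}e^{-j2\pi\beta y}\,dy=\frac{\sin(\pi\beta D_y)}{\pi\beta}=D_y\,\mathrm{sinc}\!\left(\tilde{D}_y(\sin\theta-\sin\phi)\right),
\]
which is exactly \eqref{plane}. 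For Design~2 nothing changes except that $\alpha$ carries the extra term $\pi/(\lambda F_0)$, which also vanishes as $F_0\to\infty$; so one limit argument covers both designs at once, consistent with Design~1 being the $F_0\to\infty$ specialization of Design~2.

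The step I expect to be the main obstacle is this same limit if one refuses to leave the closed form \eqref{response}, because it is a genuine $\infty\cdot 0$ indeterminacy. The prefactor $\tfrac{\sqrt\pi}{2\sqrt\alpha}$ diverges; the two $\mathrm{erf}$ arguments $\tfrac{\alpha D_y\pm 2\pi\beta}{2\sqrt\alpha}\,e^{j3\pi/4}$ escape to complex infinity along the direction of argument $3\pi/4$, where $\mathrm{erf}$ tends not to the value $1$ one might expect but to $-1$ --- the conditionally convergent Fresnel value fixed by $\int_0^{\infty}e^{jt^{2}}\,dt=\tfrac12\sqrt\pi\,e^{j\pi/4}$ --- so that their sum tends to $0$; and the phase $e^{-j\pi^{2}\beta^{2}/\alpha}$ is itself singular. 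Peeling off the finite limit by brute force would require the $O(\sqrt\alpha)$ correction in the asymptotics of $\mathrm{erf}$ along that ray, tuned to absorb the $1/\sqrt\alpha$ blow-up, whereas the integral representation makes all of these cancellations automatic. As a sanity check I would verify separately the two extremes: $\beta=0$, where the two $\mathrm{erf}$ terms coincide and the limit collapses to $D_y$, and $\lvert\beta\rvert$ bounded away from $0$, which reproduces the sidelobe decay of the ``$\mathrm{sinc}$''.
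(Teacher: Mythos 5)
Your proposal is correct, but it takes a genuinely different route from the paper. The paper's own proof (Appendix~\ref{B}) never leaves the closed form: it rewrites the two $\mathrm{erf}$ terms via the definition \eqref{erf}, sets $x=\sqrt{\alpha}$, and resolves the resulting indeterminate ratio by L'Hopital's rule (differentiating the parameter-dependent integrals in $x$), after which the surviving terms collapse to $\bigl(e^{jD_y\pi\beta}-e^{-jD_y\pi\beta}\bigr)/(2j\pi\beta)=D_y\,\mathrm{sinc}(\tilde{D}_y\sin\theta-\tilde{D}_y\sin\phi)$. You instead step back to the Fresnel-type aperture integral $J_a=\int_{-D_y/2}^{D_y/2}e^{j(\alpha y^2-2\pi\beta y)}\,dy$ of Appendix~\ref{A}, of which \eqref{response} is an exact re-expression, and pass the limit $\alpha\to 0^{+}$ through the integral using the uniform bound $\lvert e^{j\alpha y^2}-1\rvert\le \alpha D_y^2/4$ on the compact aperture. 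Both arguments are sound; yours is the more elementary and robust one, since it sidesteps exactly the $\infty\cdot 0$ cancellation (divergent prefactor, $\mathrm{erf}\to\mp 1$ along the ray $e^{j3\pi/4}$, singular phase) that the paper must tame with L'Hopital, while the paper's version has the merit of being self-contained at the level of the stated closed form without re-invoking the derivation of Theorem~\ref{t1}. One caveat you share with the paper rather than fix: by Table~\ref{af}, letting $d,F_0\to\infty$ alone gives $\alpha\to\pi\sin^2\theta/(\lambda F)$, not $0$; the paper additionally invokes $F\gg y$ to declare $\alpha\to 0$ in the far field, and your opening ``bookkeeping'' step silently makes the same extra assumption, so it would be cleaner to state explicitly that the residual $\pi\sin^2\theta/(\lambda F)$ term is being neglected (equivalently, that the limit taken is $\alpha\to 0^{+}$).
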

\vspace{-0.4cm}
\begin{proof}
Refer to Appendix \ref{B}.
\end{proof} 
	\vspace{-0.25cm}
\begin{remark}	
Lemma \ref{lem1} reveals that the derived array response of the ExLens illuminated by a spherical wave-front in \eqref{response} is a more general result compared to the result in \cite{zy2}, which means that the derived array response \eqref{response} is applicable to far-field (plane wave-front) and near-field (spherical wave-front) scenarios.
\end{remark}

\vspace{-0.7cm}
\subsection{Window Effect}
We first analyze differences of the \textbf{energy focusing characteristics} of the lens antenna array illuminated by the spherical wave-front in \eqref{response} and plane wave-front in \eqref{plane}.
Specifically, in the near-field scenarios,
the array response \eqref{response} has an evident window effect for the energy focusing property, which does not exist in the far-field scenarios.
To better understand the window effect, we split the array response \eqref{response} into three parts as follows:
\vspace{-0.2cm}
\begin{equation}\label{svi}
a(\theta ,d,\phi )=\underbrace{\frac{{\sqrt \pi  }}{{{\rm{2}}\sqrt \alpha }}}_{\left( {\rm{a}} \right)}\underbrace{e^{ - j\left( {\frac{{{{ {\pi^2 \beta^2} }}}}{{\alpha}} - \frac{5\pi }{4}} \right)}}_{\left( {\rm{b}} \right)}\underbrace{\left( {\mathrm{erf}\left( {\frac{{\alpha{D_y} + 2\pi \beta}}{{2\sqrt \alpha }}{e^{j\frac{{3\pi }}{4}}}} \right) + \mathrm{erf}\left( {\frac{{\alpha{D_y} - 2\pi \beta}}{{2\sqrt \alpha }}{e^{j\frac{{3\pi }}{4}}}} \right)} \right)}_{\left( {\rm{c}} \right)},
\vspace{-0.2cm}
\end{equation} 
where part $(a)$ is the amplitude, part $(b)$ is the phase, and part $(c)$ is the window effect for the energy focusing property in the near-field scenarios.
We denote a ``{\rm window}" function as
\begin{equation}\label{w1}	
{w(\theta ,d,\phi)}  \buildrel \Delta \over =  {\mathrm{erf}\left( {\frac{{\alpha{D_y} + 2\pi \beta}}{{2\sqrt \alpha }}{e^{j\frac{{3\pi }}{4}}}} \right) + \mathrm{erf}\left( {\frac{{\alpha{D_y} - 2\pi \beta}}{{2\sqrt \alpha }}{e^{j\frac{{3\pi }}{4}}}} \right)}.
\vspace{-0.2cm}
\end{equation}
\vspace{-1cm}
\begin{Lem}\label{lem2}
Let $v_1$ and $v_2$ denote the zero points of $\mathrm{erf}(\xi_1)$ and $\mathrm{erf}(\xi_2)$, respectively,	
where $\xi_1 = {\frac{{\alpha{D_y} + 2\pi \beta}}{{2\sqrt \alpha }}{e^{j\frac{{3\pi }}{4}}}}$ and $\xi_2 = {\frac{{\alpha{D_y} - 2\pi \beta}}{{2\sqrt \alpha }}{e^{j\frac{{3\pi }}{4}}}}$.	
We define the center and width of the energy focusing window as ${v_c} = ({v_1} + {v_2})/{2}$ and $\Delta v = |v_1 - v_2|$, respectively.
In lens Design 1, the center and width of the energy focusing window are given as follows:
\vspace{-0.25cm}
\begin{equation}\label{width1}
{v_c} = \sin\phi,\ \ \Delta v = \dfrac{{{D_y}{{\cos }^2}\phi }}{d}.
\vspace{-0.25cm}
\end{equation}
In lens Design 2, the center and width of the energy focusing window are obtained as follows:
\vspace{-0.25cm}
\begin{equation}\label{width2}
{v_c} = \sin\phi,\ \ \Delta v = {D_y}\left| {\dfrac{1}{{{F_0}}} - \dfrac{{{{\cos }^2}\phi }}{d}} \right|.
\vspace{-0.25cm}
\end{equation}
\end{Lem}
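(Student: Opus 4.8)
The plan is to locate the two window edges $v_1,v_2$ in closed form and then merely average and difference them. I take $v=\sin\theta$ as the working variable, so that $\beta=(v-\sin\phi)/\lambda$ is affine in $v$ while the quantity $\alpha$ of Table~\ref{af} is at most quadratic in $v$ through $\sin^2\theta=v^2$.

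The first step is to reduce ``a zero of $\mathrm{erf}(\xi_i)$'' to ``a zero of the scalar $\alpha D_y\pm 2\pi\beta$''. When $\alpha>0$ one writes $\xi_1=c\,e^{j3\pi/4}$ with $c=(\alpha D_y+2\pi\beta)/(2\sqrt\alpha)$ real, and the substitution $t=e^{j3\pi/4}\tau$ in \eqref{erf} turns $\mathrm{erf}$ along that ray into a Fresnel (Cornu) spiral integral proportional to $\int_0^{c}e^{j\tau^2}\,d\tau$; this curve leaves the origin, never returns to it, and tends to $\mp 1$ as $c\to\pm\infty$, so along the ray $\mathrm{erf}$ vanishes only at $c=0$, i.e. at $\alpha D_y+2\pi\beta=0$. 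The case $\alpha<0$ is identical after noting that $\sqrt\alpha=j\sqrt{|\alpha|}$ simply rotates the ray to $\arg z=\pm\pi/4$ and conjugates the spiral. The same asymptotics show that $|w|$ from \eqref{w1} sits near its maximum (close to $2$) well between the two zeros and decays to about $0$ well outside them, which is precisely the ``window'' picture and certifies that $v_1,v_2$ are its edges. I expect this to be the main obstacle: it is the only place needing something beyond routine algebra, namely the fact that the relevant branch of $\mathrm{erf}$ is genuinely single-zeroed along the rotated rays, uniformly in the sign of $\alpha$.

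The second step is to solve $g_+(v)=0$ and $g_-(v)=0$, where $g_\pm(v)=\alpha(v)D_y\pm 2\pi\beta(v)$. Since $\beta(\sin\phi)=0$, both functions share the value $g_\pm(\sin\phi)=\alpha_0 D_y$ at $v=\sin\phi$, with $\alpha_0$ the value of $\alpha$ at $\sin\theta=\sin\phi$, and their slopes there are $g'_\pm(\sin\phi)=\alpha'(\sin\phi)D_y\pm 2\pi/\lambda=(2\pi/\lambda)(\pm 1+D_y\sin\phi/F)\approx\pm 2\pi/\lambda$ because $D_y\ll F$; since $D_y\ll F$ also pushes the vertices of the parabolas $g_\pm$ far outside $[-1,1]$, each is monotone on the physical range and has a single physical zero. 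A first-order expansion about $v=\sin\phi$ then gives $v_1\approx\sin\phi-\tfrac{\lambda}{2\pi}\alpha_0 D_y$ and $v_2\approx\sin\phi+\tfrac{\lambda}{2\pi}\alpha_0 D_y$, the neglected pieces (the $D_y\sin\phi/F$ slope correction and the $O((D_y/d)^2)$ curvature terms) being controlled by $d,F\gg D_y$.

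The last step is pure assembly: $v_c=(v_1+v_2)/2=\sin\phi$ and $\Delta v=|v_1-v_2|=(\lambda/\pi)|\alpha_0|D_y$ for both designs. Substituting the Table~\ref{af} entries --- $\alpha_0=\pi\sin^2\phi/(\lambda F)-\pi\cos^2\phi/(\lambda d)$ for Design~1, and that same expression plus $\pi/(\lambda F_0)$ for Design~2 --- and retaining only the leading defocus contribution to $\alpha_0$ (dropping the $\sin^2\phi/F$ term, consistent with the approximations already underlying Theorem~\ref{t1}) produces $\Delta v=D_y\cos^2\phi/d$, i.e. \eqref{width1}, and $\Delta v=D_y\,|1/F_0-\cos^2\phi/d|$, i.e. \eqref{width2}. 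As a consistency check, $F_0\to\infty$ collapses the Design~2 width to the Design~1 width, mirroring the degeneration of $\alpha$ noted just after Theorem~\ref{t1}.
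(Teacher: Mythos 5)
Your proposal is correct and follows essentially the same route as the paper's Appendix~\ref{C}: both reduce the window edges to the zeros of $\alpha D_y \pm 2\pi\beta$ viewed as functions of $v=\sin\theta$ (the paper solves the resulting quadratics \eqref{w2}, \eqref{w4} exactly and then Taylor-expands the roots under $F,F_0,d\gg D_y$, while you linearize about $v=\sin\phi$, which agrees to the same order, and your Cornu-spiral justification that $\mathrm{erf}$ vanishes only at the origin along the rotated ray is a welcome sharpening of the paper's bare use of $\mathrm{erf}(0)=0$). The only caveat is your final dropping of the $\sin^2\phi/F$ part of $\alpha_0$: the paper's first-order expansion in \eqref{w7}--\eqref{w77} silently discards the same $D_y\sin^2\phi/F$ contribution, so your result matches the lemma, but attributing this to the approximations underlying Theorem~\ref{t1} is not quite accurate (the $\alpha$ there retains the $\sin^2\theta/F$ term); it is really an additional small-$|\phi|$-type approximation shared by both derivations.
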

\begin{proof}
	Refer to Appendix \ref{C}.
\end{proof}
\vspace{-0.25cm}
\begin{remark}
	The \textbf{similarity} of the energy focusing property of the lens antenna array illuminated by the spherical wave-front in \eqref{response} and plane wave-front in \eqref{plane} is attributed
	to the following: the center of the focusing area is approximately equal to $\sin\phi$, where $\phi$ is the angle of the source point relative to the x-axis.	
The \textbf{differences} of the energy focusing property of the ``{\rm window}" function in \eqref{response} (near-field scenarios) and that of the ``{\rm sinc}" function \eqref{plane} (far-field scenarios) are as follows:
The width of the focusing window reflects the distance information $d$, according to \eqref{width1} and \eqref{width2}.
This feature implies that a single ExLens has the positioning capability in the spherical wave-front scenarios.
Therefore, the position of the UE can also be easily extracted from the information of the energy focusing windows according to the received communication signals.
By contrast, the ``{\rm sinc}" function has a maximum energy point and cannot reflect the information of $d$, according to \eqref{plane}.
\end{remark}
\vspace{-0.5cm}

Next, we explore the changes in the energy focusing properties from the far-field to the near-field, as aperture size of the lens antenna array increases.
For illustration, we take the lens Design 2 as an example and plot changes in the energy focusing property by increasing the effective aperture $\tilde{D}_y$ (Fig. \ref{fig:window2}).
Although $\mathrm{erf}(\xi_1)$ and $\mathrm{erf}(\xi_2)$ are complex values, their imaginary parts are close to zero.
Hence we draw the real parts of $\mathrm{erf}(\xi_1)$ and $\mathrm{erf}(\xi_2)$ on the right-hand side of Fig. \ref{fig:window2}.
When $\tilde{D}_y$ is small ($\tilde{D}_y=5$ and $\tilde{D}_y=10$), 
$d=50\,m$ is much larger than the Rayleigh distance $R = {2D_y^2}/{\lambda}$, the plane wave-front assumption holds,
and $|\mathrm{erf}(\xi_1)+\mathrm{erf}(\xi_2)|$ on the left-hand side of Fig. \ref{fig:window2} presents a representation of the ``sinc" function.
As long as the plane wave-front assumption holds, the ``sinc" function will become finer and sharper as $\tilde{D}_y$ increases.
Thus, the focusing property and the angle resolution of the lens antenna array improves, as described in \cite{zy2}.
When $\tilde{D}_y$ further increases to a sufficiently large value, the plane wave-front assumption no longer holds.
Furthermore, the ``sinc" function cannot reflect the energy focusing property accurately, say, when $\tilde{D}_y\geqslant70$. 
With the further increase in $\tilde{D}_y$ ($\tilde{D}_y=100$ and $\tilde{D}_y=200$), 
the window effect for the energy focusing property appears.  
The energy received in the focusing window is approximatley equal and the energy of the side lobes is extremely small,
as shown in the last two subfigures of Fig. \ref{fig:window2}.
The width of the energy focusing window increases with $\tilde{D}_y$,
thereby receiving more energy, but pronouncing energy diffusion effect.
The aforementioned phenomenon is caused by the relative distance of the lines of $\mathrm{erf}(\xi_1)$ and $\mathrm{erf}(\xi_2)$.
The relative distance of the lines of $\mathrm{erf}(\xi_1)$ and $\mathrm{erf}(\xi_2)$ increases with $\tilde{D}_y$, thereby resulting different line shapes of $|\mathrm{erf}(\xi_1)+\mathrm{erf}(\xi_2)|$ (Fig. \ref{fig:window2}).
In summary, the energy focusing property of the lens antenna array is described by the ``sinc" function in the far-field scenarios and the ``window" function in the near-field scenarios. 
	\vspace{-0.25cm}
\begin{remark}	
	The derived array response of the ExLens illuminated by a spherical wave-front in \eqref{response} can describe the transition between the far-field and the near-field scenarios.
	In the special case of far-field (i.e., $d, F_0 \to \infty$), the angle resolution of the lens antenna array is determined by the width of the main lobe of the ``{\rm sinc}" function, which is $2/\tilde{D}_y$.
	By contrast, the width $\Delta v$ determines the angle resolution of the ExLens in the near-field.		 
	The angle resolution makes ExLens illuminated by spherical waves also suitable for multi-user communication. 
\end{remark}
\vspace{-0.25cm}
\begin{figure}[t]
	\vspace{-0.5cm}
	\centering
	\hspace{-0.31cm}
	\includegraphics[scale=0.6,angle=0]{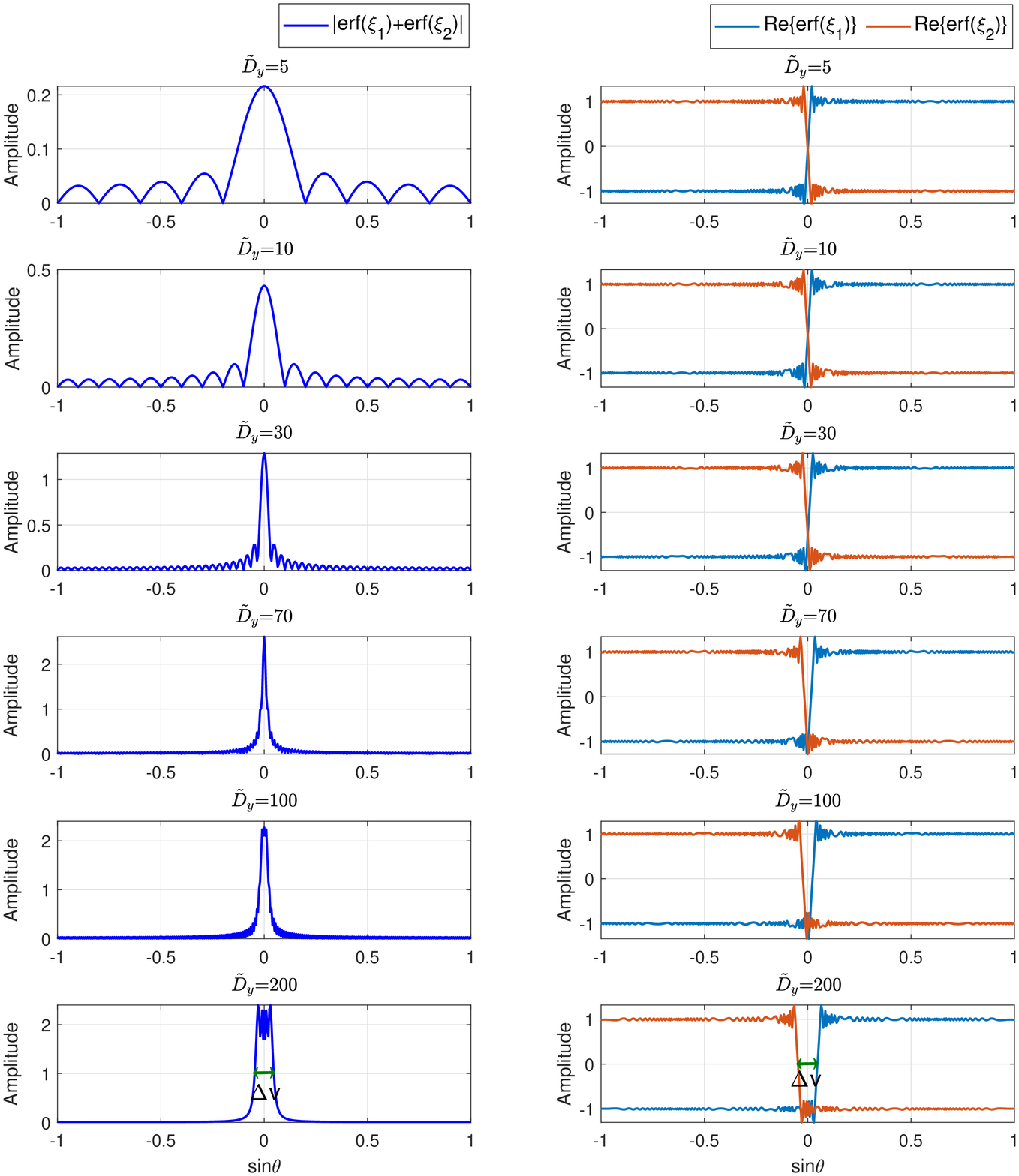}	
	\captionsetup{font=footnotesize}
	\caption{Changes in the energy focusing property of the lens antenna array as the effective aperture $\tilde{D}_y$ increases, where $\phi=0$ rad, $d=50\,m$, $F_0=15\,m$, $F=5\,m$, and $\lambda= 0.01\,m$.}
	\label{fig:window2}	
	\vspace{-0.8cm}
\end{figure}

	\vspace{-0.5cm}
\subsection{Approximation Tightness and Antenna Deployment}
\begin{figure}[t]
	\vspace{-0.1cm}
	\centering
    \includegraphics[scale=0.59,angle=0]{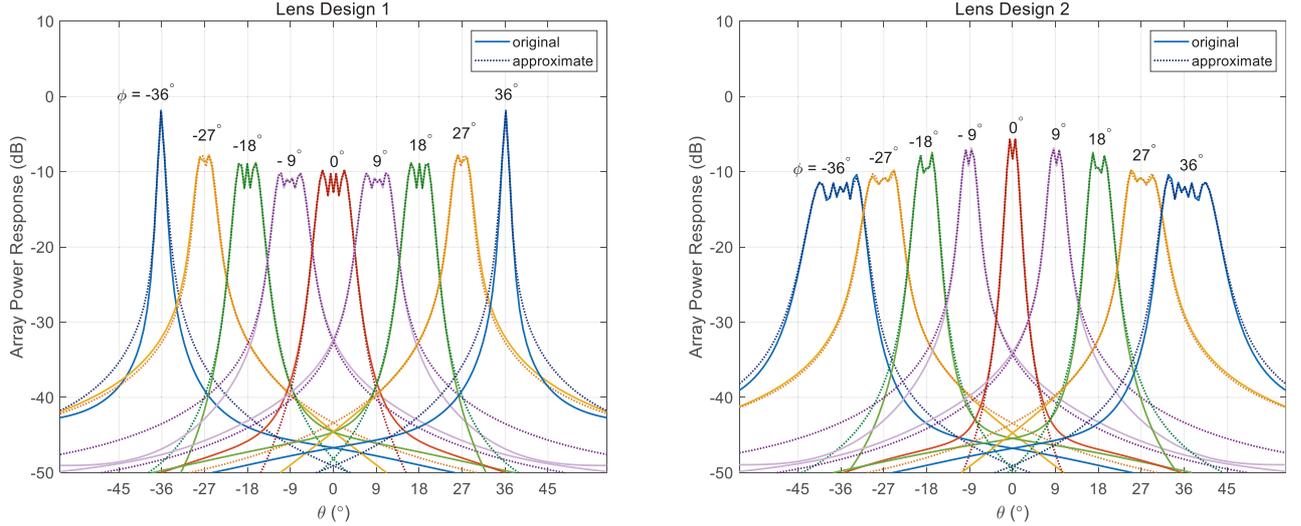}
    \captionsetup{font=footnotesize}
	\caption{Differences of the energy focusing properties of the ExLens with two different lens designs, where $\tilde{D}_y=100$, $d=7\,m$, $F_0=F=5\,m$, and $\lambda= 0.01\,m$.}\label{design}
	\vspace{-0.7cm}
\end{figure}
%\vspace{-0.5cm}
	The approximation error between the derived closed-form array response and the original one in integral form is compared in Fig. \ref{design}.
	The solid line denotes the original array response in the integral-form.
	The dotted line represents the approximated closed-form array response given in \eqref{response}.
	The solid line matches well with the dashed line when the array power response is above $-20$ dB, thereby indicating that the approximation error is small.
	Specifically, the approximation error can be safely ignored when $\phi \in [-36^\circ,36^\circ]$.
	When the array power response is below $-20$ dB, the approximate error slightly increases with $|\phi|$.
Moreover, the two different lens designs also have dissimilar energy focusing characteristics (Fig. \ref{design}).
For Design 1, the width of the energy focusing window $\Delta v = {{{D_y}{{\cos }^2}\phi }}/{d}$ is smaller with larger $|\phi|$, and this characteristic means that the energy focusing phenomenon is evident when $|\phi|$ is large.
However, for Design 2, we have $\Delta v = {D_y}\left| {{1}/{{{F_0}}} - {{{{\cos }^2}\phi }}/{d}} \right|$ and $d>{F_0}\cos^2\phi$, the energy focusing phenomenon is evident when $|\phi|$ is small.
This outcome is expected given that the ExLens of Design 1 has good energy focusing property for the uniform plane incident wave.
When we apply this design into spherical wave-front scenarios, the incident spherical wave-front becomes closer to the plane wave-front
as $|\phi|$ increases; hence, the energy focusing performance improves.
However, the ExLens of Design 2 has good energy focusing property when the source point is near the focal point  $\mathbf{c}_0=[-F_0,0]$. When $|\phi|$ becomes larger, the source point is farther from the focal point $\mathbf{c}_0$; thus, the energy focusing performance deteriorates.
As mentioned before, when $F_0\to \infty$, Design 2 becomes to Design 1.
Accordingly, the width of the energy focusing window of Design 2 equals to that of Design 1, i.e., $\lim\limits_{F_0 \to \infty}{D_y}\left| {{1}/{{{F_0}}} - {{{{\cos }^2}\phi }}/{d}} \right|={{{D_y}{{\cos }^2}\phi }}/{d}$. %as  given in Lemma \ref{lem2}.
To be more general, we use the lens Design 2 for analysis in the following sections.
We adopt the antenna elements deployment $\sin \theta_n = {n}/{N}$, 
as mentioned in Section \ref{system}.
The right hand-side of Fig. \ref{design} shows that the energy focusing window is narrower in the center and wider on the edges.
Therefore, placing denser antenna elements in the center of the array can prevent the non-detection of strong signals.
For the far-field scenario, the antenna array response reduces to the ``sinc" function \cite{zy2}, and this kind of antenna deployment is applicable.

From the analysis in this section, we get such insight that the window effect for the energy focusing property makes ExLens illuminated by spherical waves suitable for single-station localization and multi-user communication,
which are analyzed in-depth in the following sections.

\section{Position Sensing}\label{ps}
In this section, we explore the localization capbility of ExLens.
The signal that arrived at different points of the lens aperture has the same incident angle under the assumption of plane wave-front.
However,
when the UE is located in the near-field of ExLens, the signal with a spherical wave-front arrived at different points of the lens aperture has different incident angles.
Thus, relative to that of the plane wave-front, the received signal with spherical wave-front contains more abundant angular information that changes continuously from one edge of the lens aperture to another.
According to the traditional multi-point localization \cite{localization}, more angular measurements can ensure more accurate localization.
We can infer that a single ExLens has the localization capbility with the abundant angular information from the spherical wave-front.
Thus, in this section, we analyze the theoretical localization capbility of ExLens and then propose a parameterized estimation method to obtain the location parameters.

For ease of expression, we take one UE with single antenna for illustration in this section.
The system model can be easily extended to solve the case with multiple UEs as long
as the pilot signals for different UEs are orthogonal in time.
We consider the narrow band mmWave multi-path channel model. 
Thus, the uplink channel is given by 
\vspace{-0.25cm}
\begin{equation}\label{ha}
\textbf{h} = \sum\limits_{l=1}^{L} g_l \textbf{a}(d_l,\phi_l),
\vspace{-0.25cm}
\end{equation} 
where $g_l$ is the complex gain of the $l$-th path, $\textbf{a}(\cdot)\in \mathbb{C}^{N_a\times 1}$ is the array response vector with elements defined in \eqref{sv},
$l = 1$ represents the LOS path, $(d_1,\phi_1)$ is a pair of position parameters of the UE,
$l > 1$ represents the NLOS path, and $(d_l,\phi_l)$ is a pair of position parameters of the $l$-th scatterer.
We only consider the last-jump scatterers.
If all one pilots are used,
then, the received signal at the ExLens antenna array is modelled as
\vspace{-0.25cm}
\begin{equation}\label{re}
\textbf{r} = \textbf{h} + \textbf{n} = \sum\limits_{l=1}^{L} g_l \textbf{a}(d_l,\phi_l) + \textbf{n},
\vspace{-0.25cm}
\end{equation} 
where $\textbf{n}\in \mathbb{C}^{N_a\times 1}$ represents the circularly symmetric complex Gaussian noise with zero-mean and covariance matrix $\sigma^2\textbf{I}$.
Here, we define the receive signal-to-noise ratio (SNR) as SNR $ = 10\lg({\textbf{h}^{\text{H}}\textbf{h}}/{(N_a \sigma^2)})$.
{Let $\bm{\eta}_l = [g_l, d_l,\phi_l]$, $\bm{\eta} = [\bm{\eta}_1, \ldots, \bm{\eta}_L]$,
and $\textbf{h}(\bm{\eta})=\sum\limits_{l=1}^{L} g_l \textbf{a}(d_l,\phi_l)$.}
In the localization, we aim at determining $\bm{\eta}$ based on the received signal $\textbf{r}$ in \eqref{re}.

\subsection{Theoretical Localization Analysis}
{According to \cite{KAY}, the $3L\times 3L$ positive definite Fisher information matrix (FIM) of $\bm{\eta}$ is given by
%\vspace{-0.15cm}
\begin{equation}\label{fim1}
{\bf{F}}\left( \bm{\eta}  \right) = \left[
\begin{matrix}
{\bf{F}}_{11}\left( \bm{\eta}  \right) & \ldots & {\bf{F}}_{1L}\left( \bm{\eta}  \right) \\
\vdots &\ddots &\vdots \\
{\bf{F}}_{L1}\left( \bm{\eta}  \right) & \ldots & {\bf{F}}_{LL}\left( \bm{\eta}  \right) 
\end{matrix}
\right] ,
%\vspace{-0.15cm}
\end{equation}
where the $3\times 3$ matrix ${\bf{F}}_{ll'}\left( {\bm{\eta} } \right)$ is defined by
\vspace{-0.25cm}
\begin{equation}\label{fim2}
{\bf{F}}_{ll'}\left( {\bm{\eta} } \right) =\frac{2}{\sigma^2} \mathcal{R}\left\{
\dfrac{\partial \textbf{h}^{\text{H}}(\bm{\eta}) }{\partial \bm{\eta}_l}  \dfrac{\partial \textbf{h}(\bm{\eta}) }{\partial \bm{\eta}_{l'}} 
\right\}.
\vspace{-0.25cm}
\end{equation}
The information inequality for the covariance
matrix of any unbiased estimate $\hat{\bm{\eta}}$ reads \cite{KAY}
\vspace{-0.3cm}
\begin{equation}\label{CRLB}
\mathbb{E}\{(\hat{\bm{\eta}} - \bm{\eta})^\text{H}(\hat{\bm{\eta}} - \bm{\eta})  \} \geq {\bf{F}^{-1}}\left( \mathbf{\bm{\eta}}  \right).
\vspace{-0.3cm}
\end{equation}
The ${\bf{F}}\left( \bm{\eta}  \right)$ is represented in the polar coordinates.
With $x_l=-d_l\cos\phi_l$ and $y_l=d_l\sin\phi_l$,
the position of the UE or scatterer in Cartesian coordinates is given as $\mathbf{u}_l=[x_l,y_l]$, for $l = 1, \ldots, L$.
Let $\mathbf{u} = [\mathbf{u}_1,\ldots, \mathbf{u}_L]$, $\tilde{\mathbf{u}}_l=[g_l,x_l,y_l]$,  and $\tilde{\mathbf{u}} = [\tilde{\mathbf{u}}_1,\ldots, \tilde{\mathbf{u}}_L]$,
thus,
transformation to the position domain is achieved as follows:
the FIM of $\tilde{\mathbf{u}}$  is given by
${\bf{F}}\left(\tilde{\mathbf{u}}  \right) = {\bf{T}}^{\text{T}}{\bf{F}}\left( {\bm{\eta} } \right){\bf{T}}$, 
where ${\bf{T}}= {\rm blkdiag}\{\mathbf{T}_1,\ldots, \mathbf{T}_L\}$, and ${\bf{T}}_l= [1,0,0; 0,x_l/d_l, y_l/d_l; 0,y_l/d_l^2,-x_l/d_l^2]$ is the Jacobian matrix used to describe the coordinate system transformation, in which the ``;" operator separates rows in a matrix.
Then, we define the position error bound (PEB) from ${\bf{F}}\left( \tilde{\mathbf{u}}  \right)$ as
\vspace{-0.35cm}
\begin{equation}\label{peb}
{\text{PEB}}(\mathbf{u}) = {\sqrt{{\rm trace}([{\bf{F}^{-1}}\left( \tilde{\mathbf{u}}  \right)]_{(\{2:3,5:6,\ldots,3L-1:3L\},\{2:3,5:6,\ldots,3L-1:3L\})})}}.
\vspace{-0.3cm}
\end{equation}
The root mean-square estimation error (RMSE) of an unbiased estimate of ${\mathbf{u}}$ is lower-bounded by ${\text{PEB}}(\mathbf{u})$.
We need ${{\partial \mathbf{a}(d_l,\phi_l)}}/{{\partial d_l}}$ and ${{\partial \mathbf{a}(d_l,\phi_l)}}/{{\partial \phi_l}}$, which is derived in Appendix \ref{FIM}, to calculate the PEB given in \eqref{peb}.}

Fig. \ref{PEB} shows PEBs as functions of $d$, $\phi$, $D_y$, and $F_0$ ($F$ shows a similar property as $F_0$).
Given that some approximations are made to derive the closed-form
array response \eqref{sv}, we denote the obtained PEB with the approximated closed-form array response as APEB. 
We also calculate the PEB with the original received signal in the integral form denoted as OPEB.
We can evaluate the accuracy of the approximation by comparing APEB with OPEB.
Given the minimal approximation error shown in Fig. \ref{PEB} (only when $|\phi|>1.2$ rad or ${D}_y > 3\,m$, the APEB slightly deviates from the OPEB),
the theoretical localization analysis based on \eqref{sv} is accurate.

\begin{remark}	
	\vspace{-0.2cm}	
	The localization performance of ExLens degrades with the increase in $d$ and $|\phi|$. 
    This finding is expected given that the UE transits from near-field to far-field because $d$ increases,
    thereby demonstrating that a single BS loses its localization capability.	 
	The localization performance improves with the increase in ${D}_y$.
	By contrast, the value of $F_0$ has a little effect on PEBs.
	Given that the increase in ${D}_y$ can bring rich angle measurements, which is similar to adding additional BSs in the multi-point localization.	
	Under the given configuration in Fig. \ref{PEB}, an ExLens with the electrical aperture $\tilde{D}_y = 100$ and SNR $= 20$ dB can theoretically provide around centimeter-level localization accuracy for a UE located with  $d < 50\,m$ and $\phi < 1$ rad to the BS.
	\vspace{-0.6cm}	
\end{remark}

\begin{figure}[t]
	\vspace{-0.5cm}
	\centering
	\includegraphics[scale=0.47,angle=0]{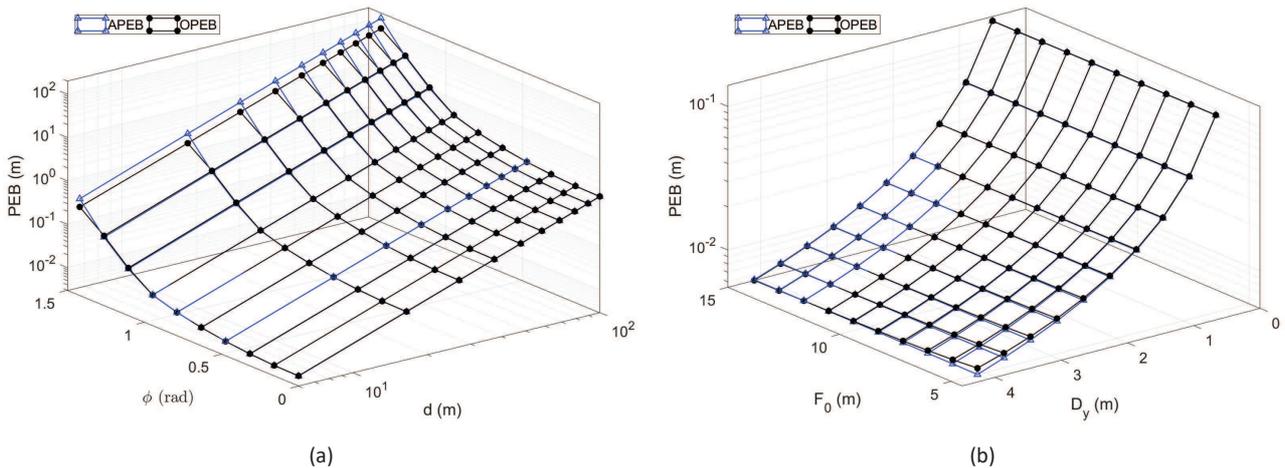}			\captionsetup{font=footnotesize}
	\caption{
		(a) $\mbox{PEB}$ as a function of $d$ and $\phi$ with $D_y=1\,m$, $F_0=F=5\,m$, $\lambda = 0.01 \,m$, $N_a = 201$, $L=1$, and SNR $=20$ dB.
		(b) $\mbox{PEB}$ as a function of ${D}_y$ and $F_0$ with $d=18\,m$, $\phi=0$ rad, $F=5\,m$, $\lambda = 0.01\,m$, $N_a = 201$, $L=1$, and SNR $=20$ dB.		
	}\label{PEB}
	\vspace{-0.5cm}
\end{figure}

\subsection{Location Parameter Estimation Method}\label{Loc}
In this subsection, we propose a location parameters estimation method to determine $(d_l,\phi_l)$,
and the gain $g_l$ for $l = 1, \ldots, L$.
The maximum likelihood (ML) estimator is given by
\vspace{-0.25cm}
\begin{equation}\label{ml1}
(\bm{\hat{d}},\bm{\hat{\phi}},\bm{\hat{g}})=\mathop{\arg\min}\limits_{ \bm{d}\in \mathbb{R}^L, \bm{\phi}\in(-\frac{\pi}{2}, \frac{\pi}{2})^L,\bm{g} \in \mathbb{C}^L} \left\Arrowvert \textbf{r} - \sum\limits_{l=1}^{L} g_l\textbf{a}(d_{l},\phi_{l}) \right\Arrowvert^2,
\vspace{-0.25cm}
\end{equation}
where ${\bm d} = [d_1,\ldots,d_L ]$, ${\bm \phi} = [\phi_1,\ldots,\phi_L]$, and ${\bm g} = [g_1,\ldots,g_L]$. 
\footnote{
	We assume that only $M_{RF}$ RF chains are available in the ExLens system, where $M_{RF}<N_a$.	
	Thus, the low-complexity power-based antenna selection method is applied.
	We let $\textbf{r} \in \mathbb{C}^{N_a\times 1}$ denote the received signal after the antenna selection, which has $M_{RF}$ non-zero elements.}
The brute-force search for the optimal estimate of $(\bm{d},\bm{\phi},\bm{g})$ in the whole continuous domain ($\bm{d}\in \mathbb{R}^L$, $\bm{\phi}\in(-{\pi}/{2},{\pi}/{2})^L$, and  $\bm{g} \in \mathbb{C}^L$) is infeasible.
Hence, we propose an effective localization method, which contains three stages: 
(1) Initialization stage, where we propose a window-based coarse localization algorithm to determine the grid search region.
(2) detection stage, in which we find a relatively accurate estimate of $(d_l,\phi_l)$, for $l = 1, \ldots, L$, from discrete grids by discrete OMP (DOMP) algorithm.
(3) refinement stage, where we iteratively refine the location parameters $(d_l,\phi_l)$ and gain $g_l$ for $l = 1, \ldots, L$ by Newton algorithm \cite{n1,n2}.

\subsubsection{Initialization stage}
We utilize the window effect for energy focusing property of ExLens to narrow down the search region. 
Lemma \ref{lem2} is developed for single-path scenarios.
For multi-path scenarios,
we let  $v_{1l}$ and $v_{2l}$ denote 
the window edges for the $l$-th path, which are affected by the position parameters $(d_l,\phi_l)$, where $l = 1, \ldots, L$.
In Appendix \ref{C}, we derive the relationships between the window edges and the location parameters.
After parameters $v_{1l}$ and $v_{2l}$ are measured, we can obtain a coarse estimation of $(d_l,\phi_l)$ for $l = 1, \ldots, L$.
We apply power detection to the received signal by each antenna, and for a given threshold, we can obtain $\hat{v}_{1l}$ and $\hat{v}_{2l}$ for $l = 1, \ldots, L$.
According to \eqref{w3} and \eqref{w5}, we have the following expression for the $l$-th path:
\vspace{-0.25cm}
\begin{equation}\label{wl1}
\mathbf{g}_l = \mathbf{G}\mathbf{q}_l+\mathbf{e}_l,
\vspace{-0.5cm}
\end{equation}
where
\vspace{-0.3cm}
\begin{equation}\label{wl2}
\mathbf{g}_l=\!\left(\!\left(\!\hat{v}_{1l}\!+\!\frac{F}{D_y}\right)^2\!\!\!-\!\left(\!\frac{F}{D_y}\!\right)^2\!\!\!+\!\frac{F}{F_0},\!\left(\!\hat{v}_{2l}\!-\!\frac{F}{D_y}\!\right)^2\!\!\!-\!\left(\frac{F}{D_y}\right)^2\!\!\!+\!\frac{F}{F_0}\right)^{\text{T}}\!\!\!,\ \ \mathbf{G}=\begin{pmatrix} \frac{2F}{D_y}& F\\
\frac{-2F}{D_y}& F
\end{pmatrix},\ \ \mathbf{q}_l=\begin{pmatrix}\sin\phi_l\\ \frac{\cos^2\phi_l}{d_l}\end{pmatrix},
\vspace{-0.1cm}
\end{equation}
and $\mathbf{e}_l$ is the noise vector caused by measurement error.
Then, the least squares estimator is given by 
\vspace{-0.3cm}
\begin{equation}\label{wl4}
\mathbf{\hat{q}}_l = (\mathbf{G}^{\text{H}}\mathbf{G})^{-1}\mathbf{G}^{\text{H}}\mathbf{g}_l,
\vspace{-0.3cm}
\end{equation}
with $\mathbf{\hat{q}}_l = [\hat{q}_{1l},\hat{q}_{2l}]$.
Thus, the position parameters $(d_l,\phi_l)$ can be recovered by
\vspace{-0.3cm}
\begin{equation}\label{wl5}
\hat{\phi}_l = \arcsin \hat{q}_{1l},\ \ \hat{d}_l = ({1-\hat{q}_{1l}^2})/{\hat{q}_{2l}}.
\vspace{-0.3cm}
\end{equation}
We denote the sets $\mathbb{S}_d = \cup \mathbb{S}_d^l$ and $\mathbb{S}_{\phi} =\cup \mathbb{S}_{\phi}^l$, for $l = 1, \ldots, L$, where 
$\mathbb{S}_d^l = \{ \hat{d}_l - \Delta d \leq d \leq \hat{d}_l+\Delta d \}$ and $\mathbb{S}_{\phi}^l = \{ \hat{\phi}_l - \Delta \phi \leq \phi \leq \hat{\phi}_l+\Delta \phi \}$. 
We generate finite discrete sets by taking $N_d$ and $N_\phi$ grids on the obtained sets $\mathbb{S}_d$ and $\mathbb{S}_{\phi}$ as $\mathbb{\bar{S}}_d$ and $\mathbb{\bar{S}}_{\phi}$, respectively.
The total search region is initialized as $\mathbb{\bar{S}}_d$ and $\mathbb{\bar{S}}_{\phi}$.

\subsubsection{Detection stage}
We apply DOMP algorithm to detect $d_l$ and $\phi_l$ from the discrete sets $\mathbb{\bar{S}}_d$ and $\mathbb{\bar{S}}_{\phi}$, respectively, for $l = 1, \ldots, L$.
We take the detection of  the $l'$-th path as an example for illustration.
Let $(\hat{d}_l,\hat{\phi}_l,\hat{g}_l)$, for $l = 1, \ldots, l'-1$, denote the estimates of the first $l'-1$ paths.
Then, the residual measurement is given by
\vspace{-0.25cm}
\begin{equation}\label{ml3}
\textbf{r}_r = \textbf{r}-\sum\limits_{l=1}^{l'-1} \hat{g}_l\textbf{a}(\hat{d}_{l},\hat{\phi}_{l}).
\vspace{-0.25cm}
\end{equation}
We apply the ML estimates by minimizing the residual power  $\left\Arrowvert \textbf{r}_r - g\textbf{a}(d,\phi) \right\Arrowvert^2$, 
or equivalently, by maximizing $S(d,\phi,g)$, where
\vspace{-0.25cm}
\begin{equation}\label{ml5}
S(d,\phi,g) = 2\mathcal{R}\left\{ \textbf{r}_r^\text{H} g\textbf{a}(d,\phi) \right\} - 
\left\Arrowvert g\textbf{a}(d,\phi)  \right\Arrowvert^2.
\vspace{-0.25cm}
\end{equation}
The generalized likelihood ratio test estimate of $(d_{l'},\phi_{l'})$ of the $l'$-th path is the solution of the following optimization problem
\vspace{-0.25cm}
\begin{equation}\label{ml2}
(\hat{d}_{l'},\hat{\phi}_{l'})=\mathop{\arg\max}\limits_{ d\in \mathbb{\bar{S}}_d,\phi\in\mathbb{\bar{S}}_{\phi} }
|\textbf{a}(d,\phi)^\text{H} \textbf{r}_r|^2/
\left\Arrowvert  \textbf{a}(d,\phi) \right\Arrowvert^2.
\vspace{-0.25cm}
\end{equation}
The corresponding gain of the $l'$-th path that maximizes $S(d,\phi,g)$ is given by
\vspace{-0.25cm}
\begin{equation}\label{ml4}
\hat{g}_{l'} = \left(\textbf{a}(\hat{d}_{l'},\hat{\phi}_{l'})^\text{H} \textbf{r}_r\right)/
\left\Arrowvert  \textbf{a}(\hat{d}_{l'},\hat{\phi}_{l'}) \right\Arrowvert^2.
\vspace{-0.25cm}
\end{equation}

\subsubsection{Refinement stage}
Given that $d_{l'}$ and $\phi_{l'}$ can take any value in 
$\mathbb{R}$ and $(-{\pi}/{2},{\pi}/{2})$, respectively, we add a  refinement stage by utilizing Newton algorithm to reduce the off-grid effect and enhance the estimation accuracy.
Let $(\hat{d}_{l'},\hat{\phi}_{l'},\hat{g}_{l'})$ denote the current estimates.
The Newton refinement is given by
\vspace{-0.25cm}
\begin{equation}\label{newton1}
\begin{pmatrix} 
\hat{\hat{d}}_{l'} \\
\hat{\hat{\phi}}_{l'}
\end{pmatrix} = 
\begin{pmatrix} 
\hat{{d}}_{l'} \\
\hat{{\phi}}_{l'}
\end{pmatrix} 
-
\begin{pmatrix} 
\frac{\partial^2 S}{\partial d^2}   & \frac{\partial^2 S}{\partial d  \partial \phi} \\
\frac{\partial^2 S}{\partial  \phi \partial d } & \frac{\partial^2 S}{\partial  \phi^2 }
\end{pmatrix}^{-1} 
\begin{pmatrix} 
\frac{\partial S}{\partial d} \\
\frac{\partial S}{\partial \phi} 
\end{pmatrix}
\vspace{-0.25cm}
\end{equation} 
where the first-order partial derivatives of $S(d,\phi,g)$ is given by 
\vspace{-0.25cm}
\begin{equation}\label{newton2}
\frac{\partial S}{\partial x} = \mathcal{R}\left\{ (\textbf{r}_r - g\textbf{a}(d,\phi))^\text{H} g \frac{\partial \textbf{a}(d,\phi)}{\partial x} \right\},
\vspace{-0.25cm}
\end{equation} 
where $x$ can be $d$ or $\phi$.
The second-order partial derivatives of $S(d,\phi,g)$ is given by 
\vspace{-0.25cm}
\begin{equation}\label{newton3}
\frac{\partial^2 S}{\partial x \partial y} = \mathcal{R}\left\{ \left(\textbf{r}_r - g\textbf{a}(d,\phi)\right)^\text{H} g \frac{\partial^2 \textbf{a}(d,\phi)}{\partial x \partial y}
- |g|^2 \frac{\partial \textbf{a}^\text{H}(d,\phi)}{\partial x}\frac{\partial \textbf{a}(d,\phi)}{\partial y}
 \right\},
 \vspace{-0.25cm}
\end{equation} 
where $x$ and $y$ can be $d$ or $\phi$. 
Refer to \eqref{sv_simple}-\eqref{dwp} and some tedious calculations, \eqref{newton2} and \eqref{newton3} can be obtained.
The gain is then updated to  
\vspace{-0.25cm}
\begin{equation}\label{newton4}
\hat{\hat{g}}_{l'} = \left(\textbf{a}(\hat{\hat{d}}_{l'},\hat{\hat{\phi}}_{l'})^\text{H} \textbf{r}_r\right)/
\left\Arrowvert  \textbf{a}(\hat{\hat{d}}_{l'},\hat{\hat{\phi}}_{l'}) \right\Arrowvert^2.
\vspace{-0.25cm}
\end{equation} 
We accept a refinement only if the new residual power $\left\Arrowvert \textbf{r}_r - \hat{\hat{g}}_{l'}\textbf{a}(\hat{\hat{d}}_{l'},\hat{\hat{\phi}}_{l'}) \right\Arrowvert^2$ is smaller than the old residual power $\left\Arrowvert \textbf{r}_r - \hat{g}_{l'}\textbf{a}(\hat{d}_{l'},\hat{\phi}_{l'}) \right\Arrowvert^2$.

Note that, on the one hand, we can realize localization with the estimated location parameters;
on the other hand, we can design data transmission 
with the reconstructed the channel between the BS and the UE by using \eqref{ha}.
We can reconstruct the channel between the BS and all different UEs by orthogonal pilot signals.
The spectral efficiency based on the reconstructed channel is also analyzed in Section \ref{nr}.

\vspace{0.2cm}
\section{Multi-user Communication}
\begin{figure}[t]
	\vspace{-0.8cm}
	\centering
	\includegraphics[scale=0.62,angle=0]{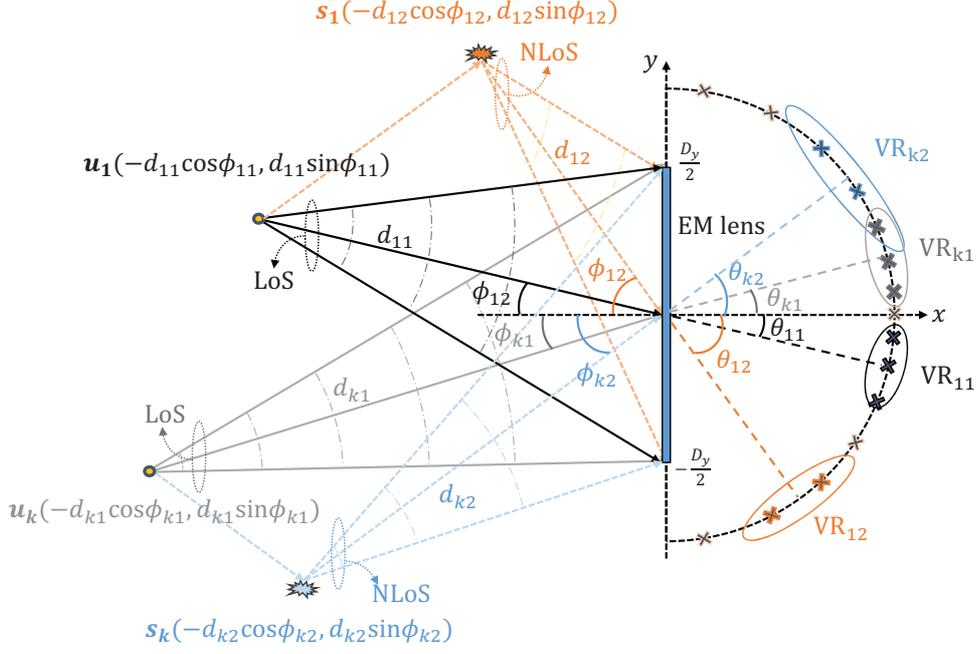}	
	\captionsetup{font=footnotesize}
	\caption{Multi-user communication with ExLens in spherical-wave scenarios.}
	\label{fig:multi-user}	
	\vspace{-0.6cm}
\end{figure}

The method proposed in Section \ref{ps} can obtain the location parameters together with the channel gains.
Multiple UEs can be simultaneously served by the ExLens system with the channels of all UEs reconstructed.
In this section, we analyze the multi-user communication performance of ExLens with limited RF chains in the coexistence of near-field and far-field UE.

We assume that $K$ single-antenna UEs are served by a BS, which is equipped with an ExLens with $N_a$ antenna elements (Fig. \ref{fig:multi-user}).
We consider the multi-path environment because of the presence of scatterers.
For illustration purposes, we only draw two paths for each UE (one line-of-sight (LoS) path and one non-line-of-sight (NLoS) path) in Fig. \ref{fig:multi-user} and omit the UE in the far-field.
Each path corresponds to an energy focusing window (FW) at the focal arc, and the overlap of the FWs of different UEs introduces inter-user-interference (IUI).
We consider the narrow band mmWave multi-path channel model \cite{hy}. 
According to \eqref{ha},
the channel between the BS and the $k$-th UE can be expressed as 
\vspace{-0.25cm}
\begin{equation}\label{h}
\textbf{h}_k = \sum\limits_{l=1}^{L_k} g_{kl}\textbf{a}(d_{kl},\phi_{kl}),
\vspace{-0.25cm}
\end{equation} 
where $L_k$ is the number of paths of the $k$-th UE, in which $l=1$ corresponds to the LoS path and $1<l\leq L_k$ corresponds to the NLoS path, $g_{kl}$ is the complex path gain for the $l$-th path of the $k$-th UE,
$(d_{k1},\phi_{k1})$ are the position parameters of the $k$-th UE,
$(d_{kl},\phi_{kl})$ are the position parameters of the $l$-th scatterer of the $k$-th UE with ($1<l\leq L_k$),
and $\textbf{a}(d_{kl},\phi_{kl})$ is the array response vector with elements given in \eqref{sv}.
Let $x_k=\sqrt{p_k}s_k$ represent the transmitted signal by the $k$-th UE, where $\sqrt{p_k}$ denotes the transmitted power, and $s_k$ denotes the independent information-bearing symbol with $\mathbb{E}\{|s_k|^2\}=1$.
The signal received at the BS is given as
\vspace{-0.25cm}
\begin{equation}\label{mu2}
\tilde{\textbf{r}} = \sum\limits_{k=1}^{K}\mathbf{h}_k{x_k} + \textbf{n} = \sum\limits_{k=1}^{K}\sum\limits_{l=1}^{L_k} g_{kl}\textbf{a}(d_{kl},\phi_{kl}){x_k} + \textbf{n},
\vspace{-0.25cm}
\end{equation}  
where $\textbf{n}\in \mathbb{C}^{N_a\times 1}$ represents the Gaussian noise with zero-mean and covariance matrix ${\sigma^2}\textbf{I}$.

We can reduce the number of RF chains for systems equipped with such array by exploiting the energy focusing property of the ExLens in near-field and far-field. 
We assume that only $M_{RF}$ RF chains are available, where $M_{RF}<N_a$.
Thus, antenna selection (e.g., by the low-complexity power-based antenna selection method) must be applied.
Let $\textbf{W}_{\text{RF}} \in \mathbb{R}^{N_a\times M_{RF}}$ denote the power-based antenna selection matrix, where the elements of $\textbf{W}_{\text{RF}}$ are $0$ or $1$.
To avoid the scenario in which antenna selection favors nearby UE over distant ones, we assume that the channel-inversion based power control is applied during the antenna selection phase.
Thus, the received signals at the BS from different UEs have comparable strength.
The signal received by the selected antennas can be expressed as 
\vspace{-0.05cm}
\begin{equation}\label{mu3}
\textbf{W}_{\text{RF}}^{\text{H}}\tilde{\textbf{r}} = \sum\limits_{k=1}^{K}\textbf{W}_{\text{RF}}^{\text{H}}\mathbf{h}_{k}{x_k} + \textbf{W}_{\text{RF}}^{\text{H}}\textbf{n}.
\vspace{-0.25cm}
\end{equation}
Given $\tilde{\textbf{r}}_s  \buildrel \Delta \over =  \textbf{W}_{\text{RF}}^{\text{H}}\tilde{\textbf{r}}$,  $\mathbf{h}_{k,s}  \buildrel \Delta \over =  \textbf{W}_{\text{RF}}^{\text{H}}\mathbf{h}_{k}$, and $\textbf{n}_s  \buildrel \Delta \over =  \textbf{W}_{\text{RF}}^{\text{H}}\textbf{n}$, we have
\vspace{-0.25cm}
\begin{equation}\label{mu33}
\tilde{\textbf{r}}_s =  \sum\limits_{k=1}^{K}\mathbf{h}_{k,s}{x_k} + \textbf{n}_s,
\vspace{-0.25cm}
\end{equation}
which can be rewritten as
\vspace{-0.05cm}
\begin{equation}\label{mu4}
\tilde{\textbf{r}}_s = \mathbf{h}_{k,s}{x_k} + \sum\limits_{k'\neq k}^{K}\mathbf{h}_{k',s}{x_{k'}} + \textbf{n}_s,
\vspace{-0.25cm}
\end{equation} 
where the term $\sum\limits_{k'\neq k}^{K}\mathbf{h}_{k',s}{x_{k'}}$ is the IUI for the $k$-th UE, and $\textbf{n}_s \in \mathbb{C}^{M_{RF}\times 1}$ represents the Gaussian noise at the selected antennas with zero-mean and covariance matrixc ${\sigma^2}\textbf{I}$.
Let $\textbf{u}_k \in \mathbb{C}^{M_{RF}\times 1}$ represent the baseband combining vector for the $k$-th UE, where $||\textbf{u}_k||=1$. The bandwidth-normalized achievable rate for the $k$-th UE is given by 
\vspace{-0.15cm}
\begin{equation}\label{rate1}
R_k = \log_2 \left(1+\dfrac{p_k|\textbf{u}_k^{\text{H}}\mathbf{h}_{k,s}|^2}{\sum\limits_{k'\neq k}^{K}{p_{k'}}|\textbf{u}_{k}^{\text{H}}\mathbf{h}_{k',s}|^2 + \sigma^2}\right),
\vspace{-0.25cm}
\end{equation} 
and for all the $K$ UE, we obtain the sum-rate as
\vspace{-0.15cm}
\begin{equation}\label{sum-rate}
R = \sum\limits_{k=1}^{K}R_k =  \sum\limits_{k=1}^{K}\log_2 \left(1+\dfrac{p_k|\textbf{u}_k^{\text{H}}\mathbf{h}_{k,s}|^2}{\sum\limits_{k'\neq k}^{K}{p_{k'}}|\textbf{u}_{k}^{\text{H}}\mathbf{h}_{k',s}|^2 + \sigma^2}\right).
\vspace{-0.25cm}
\end{equation} 
In near-field and far-field scenarios, the ExLens has the energy focusing ability. When the incident angles of different UEs are sufficiently separated, the ExLens can resolve various UEs.
For general systems where the BS cannot resolve all the UEs perfectly, we apply the linear receivers described in the following subsection to detect the signals from different UEs.

\subsection{Linear Receivers}

The combining vector $\textbf{u}_k$ is applied to \eqref{mu4} to detect $s_k$.
First, we consider the MRC scheme, which disregards the IUI term in \eqref{mu4}. In this case, $\textbf{u}_k$ is designed to simply maximize the desired signal power of the $k$-th UE, as given by
\vspace{-0.25cm}
\begin{equation}\label{mu5}
\textbf{u}_k^{*} = \mathop{\arg\max}_{\parallel \textbf{u}_k\parallel=1} \mid \textbf{u}_k^{\text{H}}\mathbf{h}_{k,s}\mid^2.
\vspace{-0.25cm}
\end{equation}
The optimal solution to \eqref{mu5} is
\vspace{-0.05cm}
\begin{equation}\label{mu6}
\textbf{u}_k^{*} = \dfrac{\mathbf{h}_{k,s}}{\parallel \mathbf{h}_{k,s} \parallel}.
\vspace{-0.25cm}
\end{equation}
The combining vector $\textbf{u}_k$ designed by the MRC scheme in \eqref{mu6} is sub-optimal in general because it ignores the IUI.

%\subsection{MMSE}
To further mitigate the IUI, we apply the MMSE-based combining scheme.
The MMSE considers the interference and finds the $\textbf{u}_k$, which minimizes the mean square error of the combined received and the desired signals, given as
\vspace{-0.25cm}
\begin{equation}\label{mu7}
\textbf{u}_k^{*} = \mathop{\arg\min}_{\parallel \textbf{u}_k\parallel=1} \mathbb{E}\{ \mid \textbf{u}_k^{\text{H}}\tilde{\mathbf{r}}_{s}-s_k\mid^2\}.
\vspace{-0.25cm}
\end{equation}
The optimal MMSE solution is 
\vspace{-0.05cm}
\begin{equation}\label{mu8}
\textbf{u}_k^{*} = \dfrac{\mathbf{R}_{rr}^{-1}\mathbf{R}_{rs}}{\parallel \mathbf{R}_{rr}^{-1}\mathbf{R}_{rs}\parallel},
\vspace{-0.25cm}
\end{equation}
where $\mathbf{R}_{rr}=\sum\limits_{k=1}^{K}p_k\mathbf{h}_{k,s}\mathbf{h}_{k,s}^{\text{H}}+\sigma^2\mathbf{I}$ denotes the autocorrelation matrix of received signal $\tilde{\mathbf{r}}_{s}$ and $\mathbf{R}_{rs}=\sqrt{p_k}\mathbf{h}_{k,s}$ denotes the cross-correlation of the received signal $\tilde{\mathbf{r}}_{s}$ and the desired signal $s_k$.

\subsection{Benchmark Schemes}

We compare the multi-user communication performance of ExLens with a conventional ULA in which both are illuminated by spherical wave-fronts.
We assume that both types of arrays have the same electrical aperture $\tilde{D}_y$ and number of antenna elements ($N_a = 2 \lfloor\tilde{D}_y\rfloor +1$).
Let ULA be placed along the y-axis centered at the origin, and the space between two adjacent antenna elements is $\Delta d = {\lambda}/{2}$ (Fig. \ref{fig:ULA}).
Take one UE for example, whose position parameters are $(d,\phi)$,
with $d$ denoting the distance between the UE and the original point, and $\phi \in (-\pi/2,\pi/2)$ as the angle of the UE relative to the x-axis. 
According to \cite{ULA2}, the array response of the ULA illuminated by the spherical wave-front is given by 
\vspace{-0.15cm}
\begin{equation}\label{ula}
\textbf{a}(d,\phi)=\left(\dfrac{d}{d_{_{-N}}}e^{-jk_0(d_{_{-N}}-d)},\ldots , \dfrac{d}{d_{_{-1}}}e^{-jk_0(d_{_{-1}}-d)}, 1, \dfrac{d}{d_{_{1}}}e^{-jk_0(d_{_{1}}-d)},\ldots,\dfrac{d}{d_{_{N}}}e^{-jk_0(d_{_{N}}-d)} \right),
\vspace{-0.15cm}
\end{equation}
where $d_n=\sqrt{d^2 + n^2\Delta d^2-2nd\Delta d\sin\phi}$ and $n \in \{0,\pm 1, \ldots,\pm N\}$.
When $d\to \infty$, ${d}/{d_n}\to 1$ and $ d_n - d  \to - n\Delta d\sin\phi$ are clearly observed.
Then, the array response illuminated by the spherical wave-front given in \eqref{ula} reduces to that illuminated by plane wave-front.   
Thus, the array response given in \eqref{ula} for ULA can be used for near-field and far-field scenarios.
We obtain the signal received by the ULA with a spherical wave-front by substituting \eqref{ula} into \eqref{mu2} with $K$ UE.
We assume that the ULA is equipped with $M_{RF}$ RF chains.
Thus, antenna selection is necessary.
However, given that the optimal antenna selection scheme for the ULA system illuminated by the spherical wave-front with multi-users is unknown in general, we apply two analog combining matrix design schemes as benchmarks for a remarkable comparison.
In the first benchmark,
we adopt the power-based antenna selection because of its simplicity.
Then, we apply the MRC and MMSE-based digital combining vector design schemes to the ULA system.
However, when $M_{RF}$ is small, the performance for ULA is limited because of the limited array gain with the small number of antennas selected.
Thus, we also consider the second benchmark by applying the approximate Gram Schmidt-based hybrid precoding scheme to design the analog combining matrix for ULA \cite{GS}.
To mitigate the IUI, the MMSE-based digital combining vector design scheme is then applied.

\section{Numerical Results}\label{nr}

\subsection{Localization Performance}

In this subsection, we discuss the performance of the proposed localization method.
We define $\mbox{CRLB}(\mathbf{d})=\sqrt{\sum_{k=1}^{L}{\bf{F}}^{-1}(\bm{\eta})_{(2k-1,2k-1)}}$ for comparison, where ${\bf{F}}^{-1}(\bm{\eta})_{(2k-1,2k-1)}$ denote the $(2k-1,2k-1)$-th element of the matrix ${\bf{F}}^{-1}(\bm{\eta})$.
Similarly,  we define $\mbox{CRLB}({\bm \phi})=\sqrt{\sum_{k=1}^{L}{\bf{F}}^{-1}(\bm{\eta})_{(2k,2k)}}$.
The PEB($\mathbf{u}$) is calculated according to \eqref{peb}.
Let $\mathbf{x}$ represent $\mathbf{d}$, ${\bm \phi}$, or $\mathbf{u}$, and $\hat{\mathbf{x}}_i$ denote
the estimate of $\mathbf{x}$ at the $i$-th Monte Carlo simulation.
The RMSE is defined as 
$\mbox{RMSE}(\mathbf{x})=\sqrt{\sum_{i=1}^{T}||\hat{\mathbf{x}}_i-\mathbf{x}||^2/T}$.
We define the normalized mean square error (NMSE) as $\sum_{i=1}^{T}||\mathbf{h}-\hat{\mathbf{h}}_i||^2/||\mathbf{h}||^2/T$ for channel estimation, where $\hat{\mathbf{h}}_i$ is the estimate of $\mathbf{h}$ at the $i$-th Monte Carlo simulation.
All numerical results provided in this subsection are obtained from $T=1000$ independent Monte Carlo simulations.
The position parameters $(d,\phi)$ of a UE is generated with $d \sim \mathcal{U}[7,30]\,m$ and $\phi \sim \mathcal{U}[-\pi/5,\pi/5]$ rad, where $\mathcal{U}$ denotes the uniform distribution.  
The settings for the ExLens are fixed to $D_y=1\,m$, $F_0=F=5\,m$, $\lambda = 0.01 \,m$, and $N_a = 201$.

Fig. \ref{Est} demonstrates that
(a) $\mbox{RMSE}$ versus $\mbox{CRLB}$ for the estimate of $\mathbf{d}$;
(b) $\mbox{RMSE}$ versus $\mbox{CRLB}$ for the estimate of ${\bm \phi}$;
(c) $\mbox{RMSE}$ versus $\mbox{PEB}$ for the estimate of $\mathbf{u}$;
and	
(d) $\mbox{NMSE}$ for the channel estimation.
First, we observe an improvement in the estimation accuracy of proposed method (denoted as NOMP) with respect to DOMP, where DOMP reaches performance floors with the increase in SNR. 
The performance plateau shows a fundamental algorithmic limitation of DOMP, and highlights the critical role of cyclic Newton refinements in NOMP, as explained in \cite{n1}.
Second, NOMP does not achieve the CRLB in the estimate of $\mathbf{d}$ (Fig. \ref{Est}(a)), but it closely follows the bound for all SNRs.  In the estimate of ${\bm \phi}$ (Fig. \ref{Est}(b)), NOMP can achieve the CRLB.
The array response of the ExLens is more sensitive to the changes in $\phi$ than that in $d$ in spherical scenarios.
Accordingly, NOMP performs better in ${\bm \phi}$ estimation than $\mathbf{d}$ estimation.
Third, the performance of $\mathbf{u}$ estimates (Fig. \ref{Est}(c)) shows similar trends to that of $\mathbf{d}$ (Fig. \ref{Est}(a)).
The estimation error of position $\mathbf{u}$ is mainly determined by the estimation error of $\mathbf{d}$ because the estimate of ${\bm \phi}$ achieves CRLB.
In the simulation settings, the proposed localization method can achieve meter-, decimeter-, and centimeter-level accuracies when SNR $> 0$, $> 20$, and $> 40$ dB, respectively.
Moreover, Fig. \ref{Est}(d) shows that the proposed method performs well in channel estimation, thereby demonstrating that the channel and location parameter estimation can be simultaneously performed in  spherical-wave scenarios by directly reusing
the communication signals.

\begin{figure}[H]
	\vspace{-0.5cm}
	\centering
	\includegraphics[scale=0.45,angle=0]{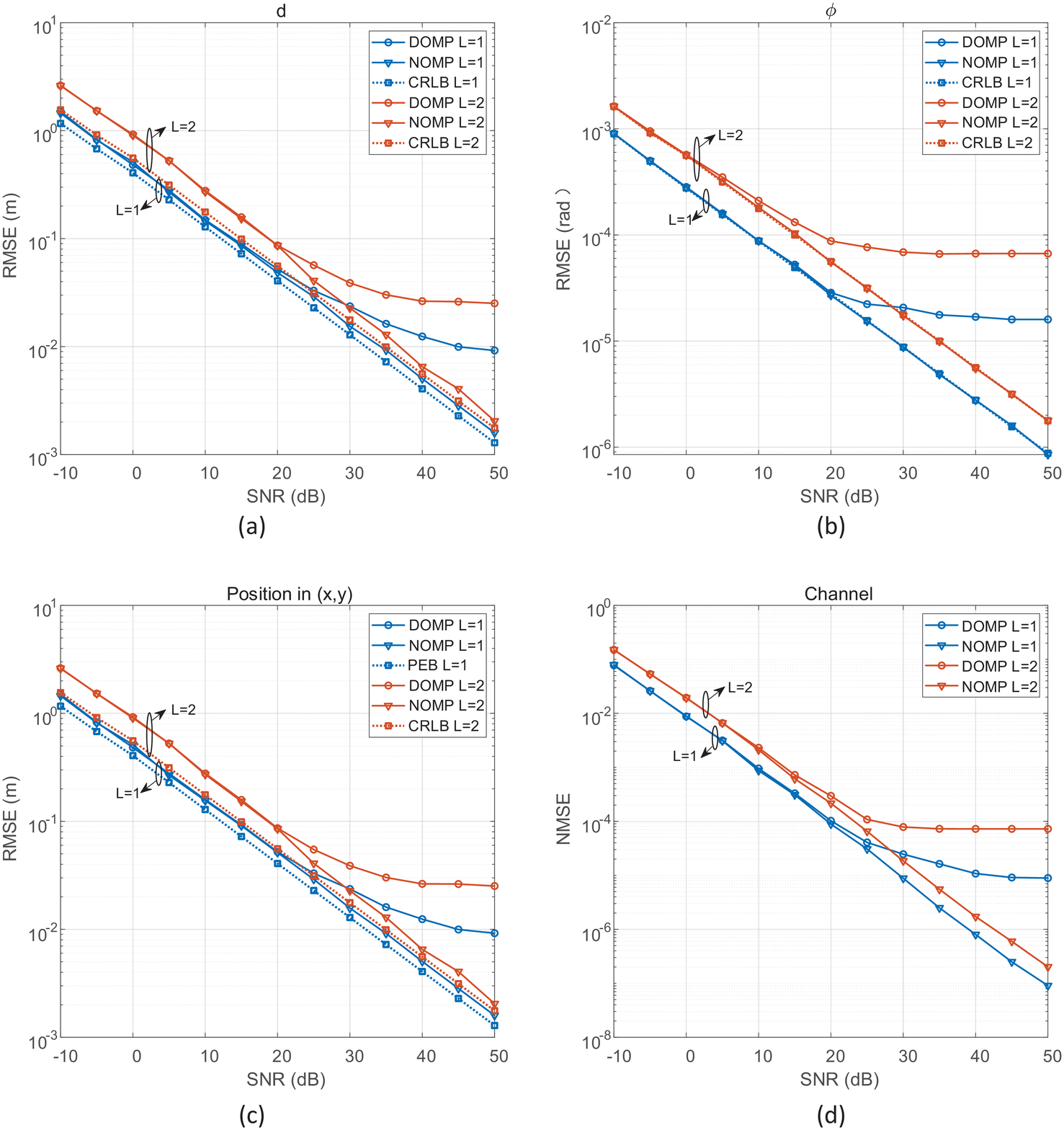}				
	\captionsetup{font=footnotesize}
	\caption{
		(a) $\mbox{RMSE}$ versus $\mbox{CRLB}$ for the estimate of $\mathbf{d}$.
		(b) $\mbox{RMSE}$ versus $\mbox{CRLB}$ for the estimate of ${\bm \phi}$.	
		(c) $\mbox{RMSE}$ versus $\mbox{PEB}$ for the estimate of $\mathbf{u}$.	
		(d) $\mbox{NMSE}$ for the estimate of $\mathbf{h}$.
		The settings for the ExLens are fixed to $D_y=1\,m$, $F_0=F=5\,m$, $\lambda = 0.01\,m$, and $N_a = 201$.
		In the case $L = 1$, $d = 16.8837 \,m$ and $\phi = 0.0693$ rad.
		In the case $L = 2$, $d_1 = 12.8657 \,m$, $\phi_1 = -0.1935$ rad, $d_2 = 14.4962 \,m$, and $\phi_2 = 0.1897$ rad.
	}\label{Est}
	\vspace{-0.6cm}
\end{figure}

\subsection{Multi-user Communication Performance}
In this subsection, we compare the multi-user communication performance of ExLens with that of the conventional ULA antenna array.
In the following simulations, the ExLens and ULA systems serve near-field and far-field UE simultaneously.
We assume that ExLens and ULA have the same electrical aperture $\tilde{D}_y$ and number of antenna elements $N_a = 2\lfloor\tilde{D}_y\rfloor+1$.
For the approximate Gram Schmidt-based hybrid precoding scheme applied to the benchmark ULA system, the size of the beamsteering codebook $N_{cb}$ is set to $1024$, and the resolution of the phase shifters in the analog combining network is assumed to be $10$ bits.
All numerical results provided in this section are obtained from Monte Carlo simulations with $1000$ independent channel realizations.
The position parameters $(d,\phi)$ of a UE is generated with $d \sim \mathcal{U}[20,320]\,m$ and $\phi \sim \mathcal{U}[-\pi/5,\pi/5]$ rad.
The low-complexity power-based antenna selection method is applied to ``LENS MMSE", ``LENS MRC", ``ULA MMSE", and ``ULA MRC", and the Gram Schmidt-based analog combining method is applied to ``ULA GS MMSE".

\begin{figure}[t]
	\centering
	\vspace{-0.8cm}	
	\begin{minipage}[t]{0.48\textwidth}
		\centering
		\includegraphics[scale=0.63,angle=0]{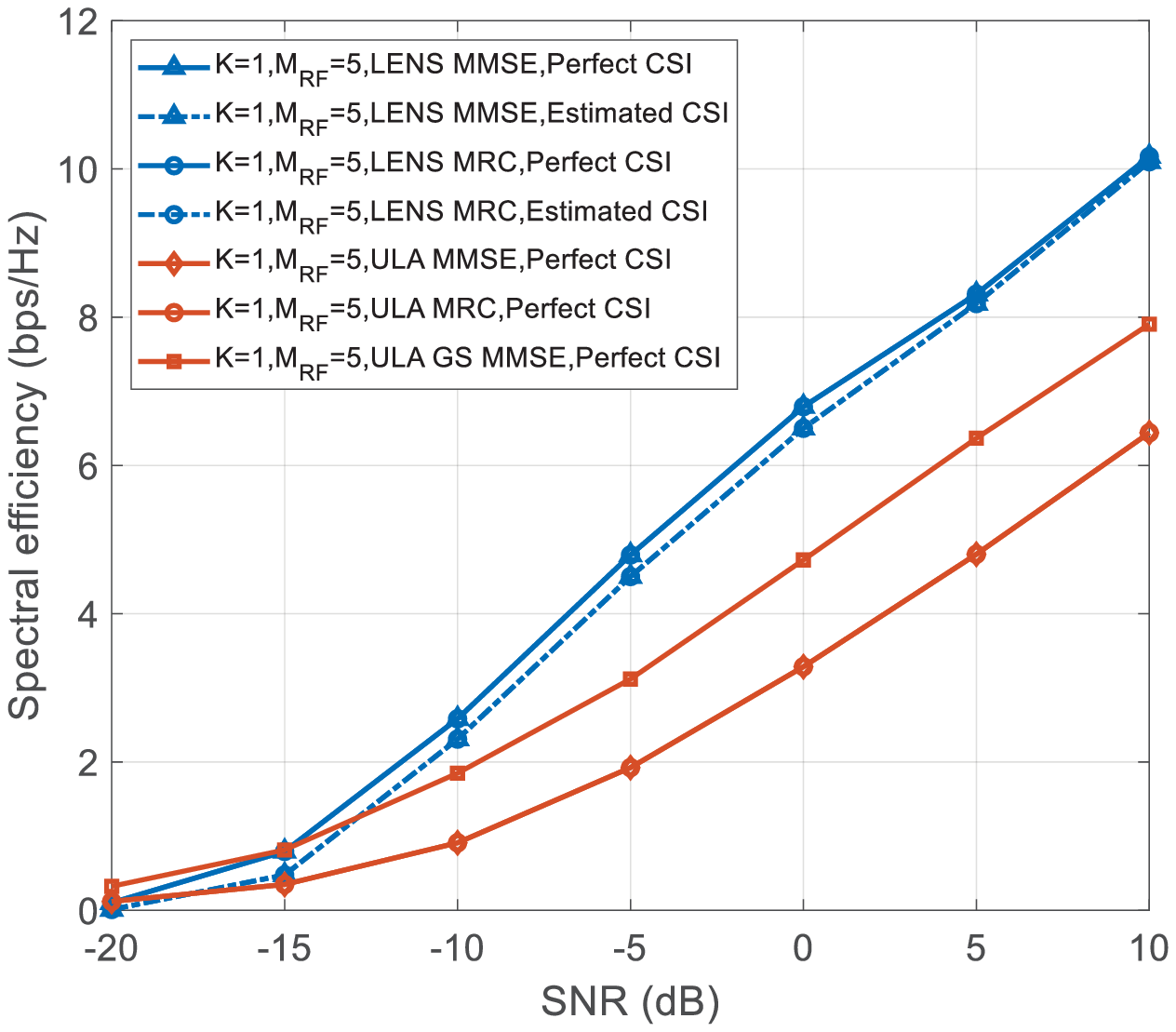}	
		\captionsetup{font=footnotesize}
		\caption{Comparison of the spectral efficiencies for single-user with different SNRs, where $K=1$, $M_{RF}=5$, $L=2$, $F=5\,m$, $F_0= 15\,m$, $\tilde{D}_y=100$, $\lambda = 0.01\,m$, and $N_a = 201$.}
		\label{fig:SNR1}
	\end{minipage}
	\hspace{0.25cm}
	\begin{minipage}[t]{0.48\textwidth}
		\centering
		\includegraphics[scale=0.63,angle=0]{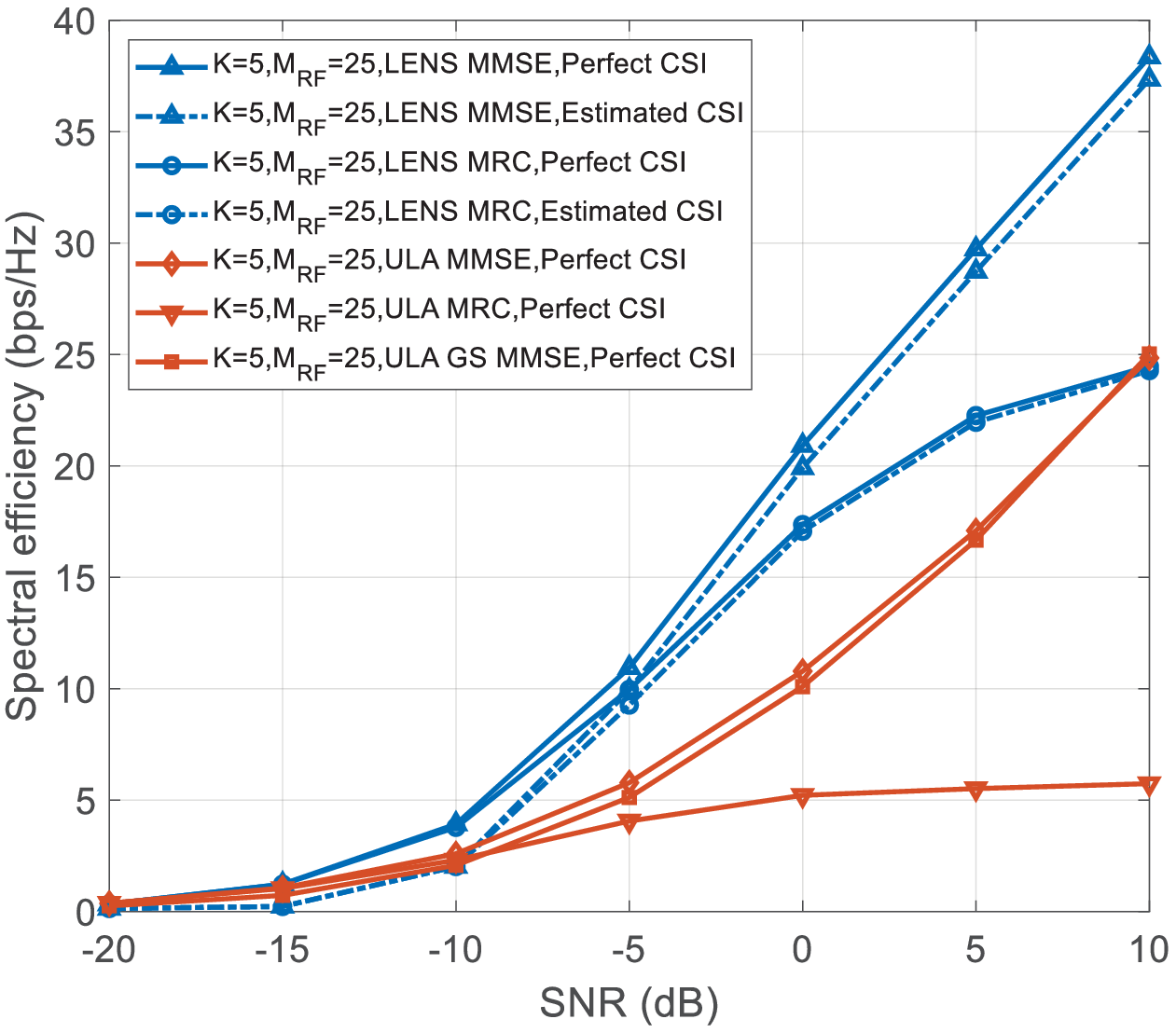}	
		\captionsetup{font=footnotesize}
		\caption{Comparison of the spectral efficiencies for multi-user scenarios with different SNRs, where $K=5$, $M_{RF}=25$, $L_k=2$, $F=5\,m$, $F_0= 15\,m$, $\tilde{D}_y=100$, $\lambda = 0.01\,m$, and $N_a = 201$.}
		\label{fig:SNR}
	\end{minipage}   
	%\vspace{-0.6cm}
\end{figure}

Figs. \ref{fig:SNR1} and \ref{fig:SNR} compare the spectral efficiencies of different schemes with varying SNRs. The single-user (Fig. \ref{fig:SNR1} with $K=1$, $L_k = 2$, $M_{RF}=5$) and multi-user (Fig. \ref{fig:SNR} with $K=5$, $L_k = 2$, $M_{RF}=25$) scenarios are considered. The other parameters for ExLens are fixed to $F=5\,m$, $F_0= 15\,m$, $\tilde{D}_y=100$, $\lambda = 0.01\,m$, and $N_a = 201$.
In the benchmark ULA system, we assume that perfect CSI is available at the BS. 
In the ExLens system, we consider both cases with perfect and estimated CSIs.
In the ExLens system with a limited number of RF chains, we apply the power based antenna selection before channel estimation.
The channel is then estimated by the method proposed in Section \ref{Loc} with the received signal from the selected antennas.
First, the spectral efficiency of the ExLens systems outperforms the ULA systems because most energy of the received signal is concentrated on the selected antennas for the ExLens systems with the energy focusing property. By contrast, the energy in the ULA system is almost evenly spread across each antenna.
The simple power-based antenna selection method causes significant energy loss for the ULA systems, thereby resulting in poor performance in terms of spectral efficiency.
Second, in single-user scenarios (Fig. \ref{fig:SNR1}), the ``ULA GS MMSE" scheme outperforms the simple power-based antenna selection schemes ``ULA MMSE" and ``ULA MRC". 
This phenomenon is expected because the approximate Gram Schmidt-based hybrid precoding method considers the channel characteristic when it is applied to the ``ULA GS MMSE" scheme.
However, the performance of the ``ULA GS MMSE" scheme with much higher computational complexity is still worse than that of ``LENS MMSE" and ``LENS MRC" schemes.
Given the absence of the IUI for single-user scenarios, the MRC and MMSE schemes have the same performance.
In multi-user scenarios (Fig. \ref{fig:SNR}), the advantages of the ExLens systems are more pronounced over the ULA systems.
The MRC schemes perform worse than the MMSE schemes, especially for high SNRs, due to the presence of the IUI.
Lastly, the performance of the ``LENS MMSE" and ``LENS MRC" schemes with the estimated CSI is close to that based on perfect CSI, thereby  showing the effectiveness of the proposed channel estimation method.

\begin{figure}[t]
	\centering
\vspace{-0.8cm}	
	\begin{minipage}[t]{0.48\textwidth}
		\centering
		\includegraphics[scale=0.63,angle=0]{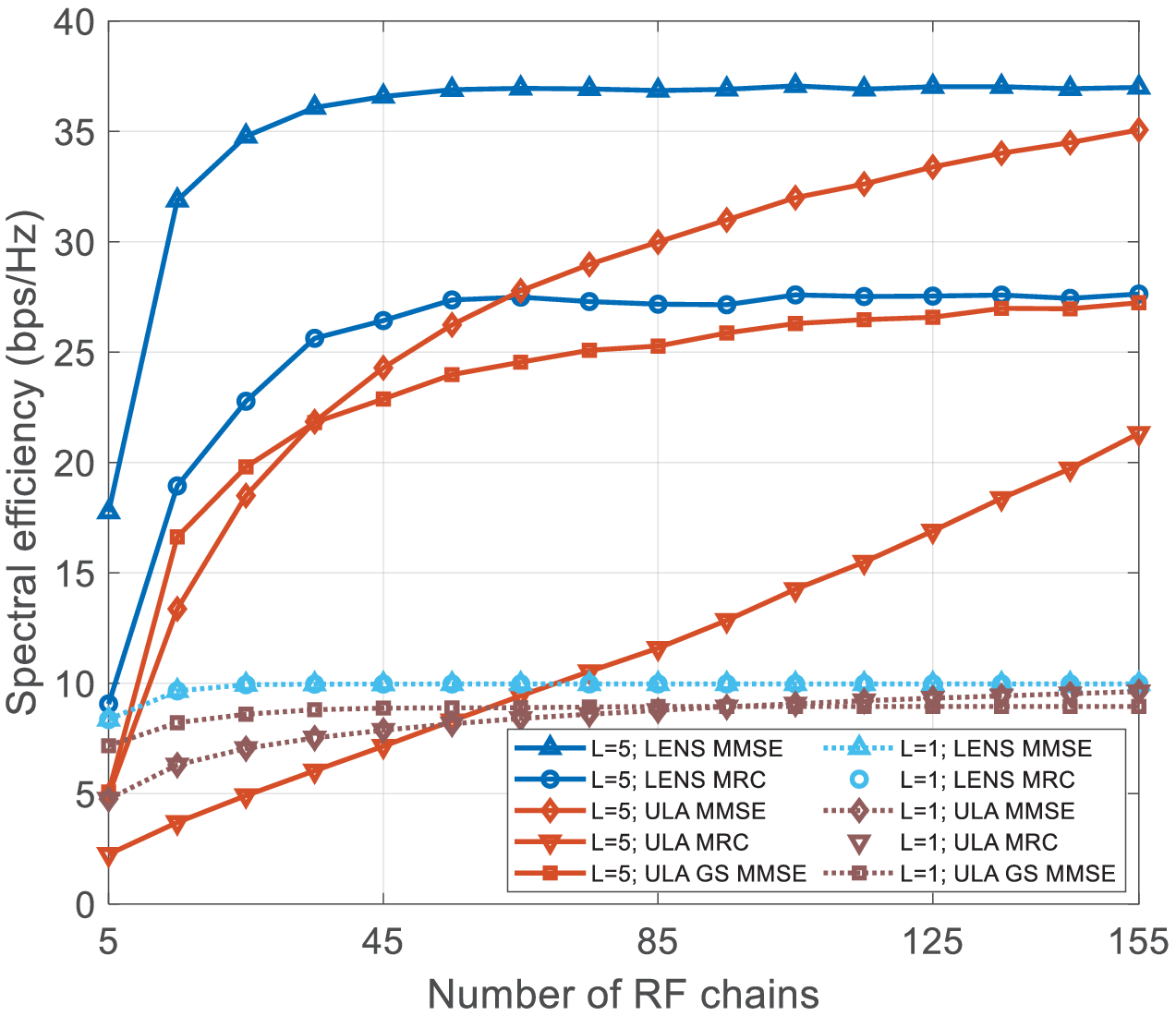}	
		\captionsetup{font=footnotesize}
		\caption{Comparison of the spectral efficiencies for single-user and multi-user scenarios with different number of RF chains, where $L_k=2$ for the $k$-th UE, $F=5\,m$, $F_0= 15\,m$, $\tilde{D}_y=100$, SNR$=10$ dB, $\lambda = 0.01\,m$, and $N_a = 201$.}
		\label{fig:MRF}
	\end{minipage}
	\hspace{0.25cm}
	\begin{minipage}[t]{0.48\textwidth}
		\centering
		\includegraphics[scale=0.63,angle=0]{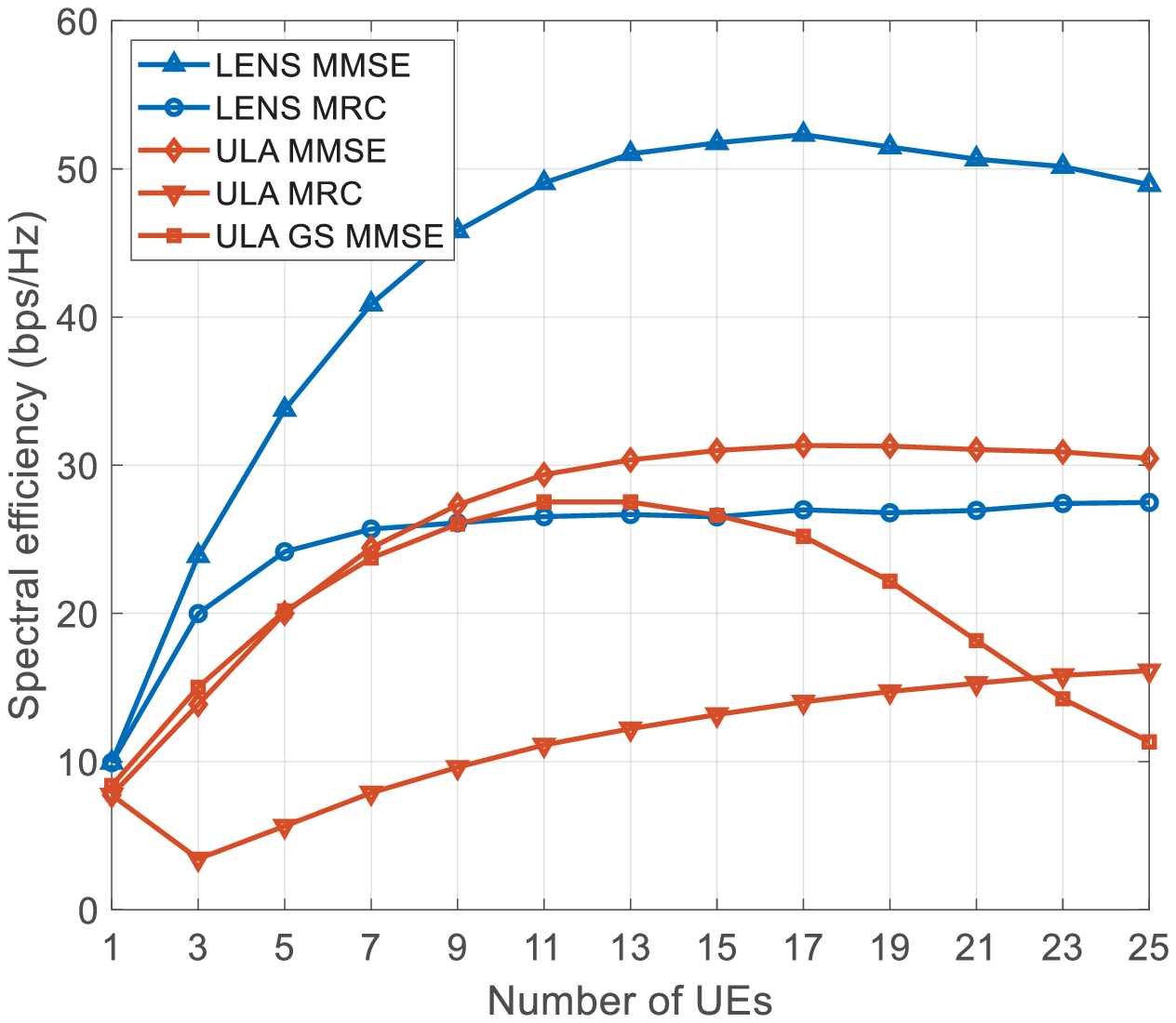}	
		\captionsetup{font=footnotesize}
		\caption{Comparison of the spectral efficiencies for multi-user scenarios with different number of UE, where $L_k=2$ for the $k$-th UE, $M_{RF}=20$, $F=5\,m$, $F_0= 15\,m$, $\tilde{D}_y=100$, SNR$=10$ dB, $\lambda = 0.01\,m$, and $N_a = 201$.}
		\label{fig:K}
	\end{minipage}   
	\vspace{-0.8cm}
\end{figure}

Then, we compare the spectral efficiencies of different schemes by increasing the number of RF chains for both single-user ($K=1$) and multi-user ($K=5$) scenarios (Fig. \ref{fig:MRF}).
As it benefits from the energy focusing property of ExLens,
the ``LENS MMSE" scheme always outperforms the ULA schemes for different numbers of RF chains.
For single-user scenarios,
as $M_{RF}$ increases, the spectral efficiencies of different schemes improve.
When $M_{RF}=15$, the performance of the ExLens schemes almost reaches maximum.
The ULA schemes require much more RF chains to achieve a similar performance.
Therefore, the energy focusing property of ExLens
is beneficial for reducing the number of RF chains, 
and this outcome helps to significantly reduce the signal processing complexity and hardware cost without notable performance degradation.
For the ULA schemes, ``ULA GS MMSE" shows advantages over other schemes when $M_{RF}$ is small, but with further increase in $M_{RF}$, the performance of ``ULA GS MMSE" saturates, a situation which is also explained in \cite{GS}. 
For multi-user scenarios, the number of RF chains required by the ``LENS MMSE" to achieve the optimal performance increases to around $45$ when $K$ increases to $5$.
Thus, more RF chains are needed to distinguish more UEs.
However, compared with the total number of antenna elements, the number of RF chains needed by the ExLens system remains low ($45 < 201$).    
The advantage of the ``LENS MMSE" scheme is more evident than the ``ULA MMSE" scheme with a smaller number of RF chains ($M_{RF}< 45$).

The performance of spectral efficiency versus the number of served UE for different schemes is shown in Fig. \ref{fig:K}, by fixing the number of the RF chains to $20$, the value of SNR to $10$ dB, and parameters for ExLens to $F=5\,m$, $F_0= 15\,m$, and $\tilde{D}_y=100$. 
The ``LENS MMSE" scheme always has the highest spectral efficiency among all others.
Therefore, the UE resolution of the ExLens system is greater than that of the ULA system with limited RF chains.
Since the IUI becomes larger as $K$ increases, the performance of the MRC schemes become worse than that of the MMSE schemes.
The beamsteering codebook for the approximate Gram Schmidt scheme is designed for the single-user systems of the ULA, 
where the analog combiner designed by the approximate Gram Schmidt scheme exhibits larger deviation from the real channel as $K$ increases.
Such higher deviation causes the worse performance of the ``ULA GS MMSE" than the ``ULA MMSE" when $K>5$. 
The spectral efficiency of the ``LENS MMSE" scheme initially increases with $K$, then decreases when $K>17$.
This trend is because the ability of the ExLens system to serve UE becomes limited given a number of RF chains.

\begin{figure}[t]
	\vspace{-0.8cm}
	\centering
	\includegraphics[scale=0.63,angle=0]{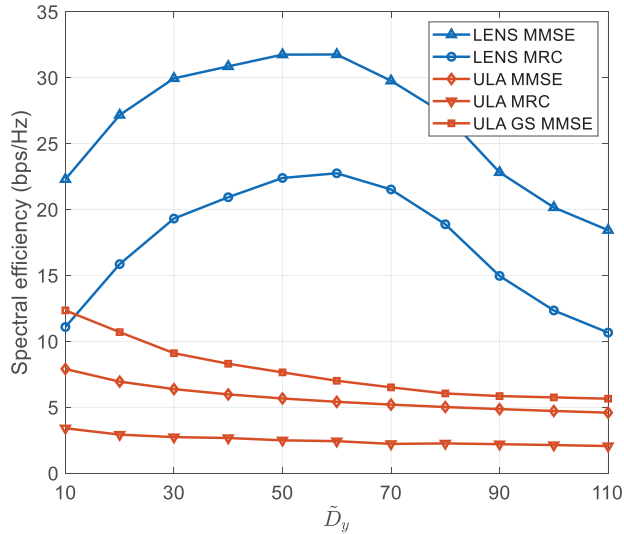}	
	\captionsetup{font=footnotesize}
	\caption{Comparison of the spectral efficiencies for multi-user scenarios with different antenna array aperture sizes, where $K=5$, $L_k=2$ for the $k$-th UE, $M_{RF}=5$, $F=5\,m$, $F_0= 15\,m$, SNR$=10$ dB, and $\lambda = 0.01\,m$.
}
	\label{fig:Dy}
	\vspace{-0.8cm}
\end{figure}

Finally, we evaluate the spectral efficiencies of different schemes by increasing the electrical aperture $\tilde{D}_y$ of the lens antenna array.
The results are shown in Fig. \ref{fig:Dy}, with $K=5$, $L_k=2$ for the $k$-th UE, $F=5\,m$, $F_0= 15\,m$, $M_{RF}=5$, and SNR$=10$ dB.
Channel inversion-based power control during the antenna selection phase is applied for all simulations. 
With unchanged total received power of all systems,
as $\tilde{D}_y$ increases, the total number of antenna elements increases for ULA systems, and the energy at each antenna element decreases.
Thus, with a fixed number of RF chains, the total received energy decreases accordingly, thereby leading to lower spectral efficiency for the ULA schemes.
However, ExLens systems present an interesting phenomenon, i.e., the spectral efficiency of the ExLens schemes shows a trend of first increasing and then decreasing.
According to the change of the energy focusing effects from far-field to near-field (Fig. \ref{fig:window2}), this phenomenon is easily understood.
When $\tilde{D}_y$ is very small, the ``sinc" function holds, and the width of the main lobe is $2/\tilde{D}_y$ and determines the system resolution to the UE. As $\tilde{D}_y$ increases, the system resolution rises.
Hence, the spectral efficiency of the lens systems increases with $\tilde{D}_y$ initially.
As $\tilde{D}_y$ further increases, the ``sinc" function no longer holds, and the near-field effect becomes obvious.
Then, the width of the focusing window determines the system resolution to the UE.
As $\tilde{D}_y$ further increases, the system resolution decreases. 
Thus, the spectral efficiency of the lens systems decreases with the further increase of $\tilde{D}_y$. 
During this process, the ExLens system resolution to the UE will reach the maximum at some value of $\tilde{D}_y$. Moreover, the optimal size of $\tilde{D}_y$ is $60$ under the simulation configuration given in Fig. \ref{fig:Dy}. These observations are instructive for the design of the electrical aperture size of the ExLens.

\section{Conclusion}
We considered the communication and localization problems with an ExLens.
First, we derived the closed-form antenna array response of ExLens by considering the spherical wave-front for two different EM lens designs.
The relationship between the antenna array response of ExLens in the near-field and far-field revealed that the derived near-field array response includes the existing ``sinc" function response as a special case.
We further analyzed the changes in the energy focusing properties from the far-field to the near-field and the difference of the energy focusing properties of the two EM lens designs.
The window focusing property in the near-field also revealed the great potential of ExLens for position sensing and multi-user communication.
The theoretical uplink localization ability of an ExLens was analyzed through the Fisher information.
To utilize the window focusing property for position sensing, 
an effective location parameter estimation method was next proposed.
The results showed that the localization performance is close to the CRLB and can be enhanced as the aperture of ExLens increase. 
In addition, the channel can be effectively reconstructed by the 
proposed estimation method.
Finally, the multi-user communication performance of ExLens that serves UE in near-field and far-field was investigated with perfect and  estimated CSIs.
Simulation results verified the effectiveness of the proposed channel estimation method and 
showed that the proposed ExLens with MMSE 
receiver achieves significant spectral efficiency gains and complexity-and-cost reductions compared with the ULA systems.

\begin{appendices}
\section{}\label{A}
In this section, we derive the array response of ExLens illuminated by a spherical wave-front.
For Design 1, 
by bringing \eqref{p1} into \eqref{r}, together with
$|| \mathbf{u} - \mathbf{p}|| = \sqrt {{d^2} + {y^2} - 2dy\sin \phi }$, $|| \mathbf{p}-\mathbf{b}_0|| = \sqrt {{F^2} + {y^2}}$, $|| \mathbf{p}-\mathbf{b}|| = \sqrt {{F^2} + {y^2}+2yF\sin \theta }$, $\eta (\mathbf{u}, \mathbf{p}) = {\lambda  /
		{\vphantom {\lambda  {( {2\pi \sqrt {{d^2} + {y^2} - 2dy\sin \phi } } )}}}  {( {4\pi \sqrt {{d^2} + {y^2} - 2dy\sin \phi } } )}}$,  and $\kappa(\mathbf{p},\mathbf{b})={\lambda}  / {(4\pi\!\sqrt {{F^2} + {y^2}+2yF\sin \theta })}$, we get
	\vspace{-0.25cm}
\begin{equation}\label{r1}
r(\theta ,d,\phi ) = \int\limits_{ - {{{D_y}} \mathord{\left/
			{\vphantom {{{D_y}} 2}} \right.
			\kern-\nulldelimiterspace} 2}}^{{{{D_y}} \mathord{\left/
			{\vphantom {{{D_y}} 2}} \right.
			\kern-\nulldelimiterspace} 2}} {\frac{\lambda^2 {e^{ - j{k_0}\sqrt {{d^2} + {y^2} - 2dy\sin \phi } }}{e^{ - j\left( {{\rm{ }}{\phi _0} + {k_0}\left( {\sqrt {{F^2} + {y^2} + 2yF\sin \theta } {\rm{ - }}\sqrt {{F^2} + {y^2}} } \right)} \right)}}}{{16\pi^2 \sqrt {{d^2} + {y^2} - 2dy\sin \phi }\ \sqrt {{F^2} + {y^2}+2yF\sin \theta } }}} dy.
		\vspace{-0.25cm}
\end{equation}
To derive the closed form of \eqref{r1}, we have to make the following assumptions: (A1) $d\gg y$ and (A2) $F\gg y$, where $y \in [-D_y/2,D_y/2]$.
Given (A1), we utilize Taylor series approximation and have
\vspace{-0.25cm} 
\begin{equation}\label{a1}    
\sqrt {{d^2} + {y^2} - 2dy\sin \phi }  \approx d - y\sin \phi  + {y^2}\frac{{{{\left( {\cos \phi } \right)}^2}}}{{2d}}.
\vspace{-0.25cm}
\end{equation}
Moreover, for the same reason, we have 
\vspace{-0.25cm}
\begin{equation}\label{a2}
\frac{1}{{\sqrt {{d^2} + {y^2} - 2dy\sin \phi } }} = \frac{1}{d}{\left( {1 + \frac{{{y^2}}}{{{d^2}}} - \frac{{2y}}{d}\sin \phi } \right)^{ - \frac{1}{2}}} \approx \frac{1}{d}. 
\vspace{-0.25cm}
\end{equation}
Then, given (A2), we also utilize Taylor series approximation and obtain 
\vspace{-0.25cm}
\begin{equation}\label{a3}    
\sqrt {{F^2} + {y^2} + 2yF\sin \theta } {\rm{ - }}\sqrt {{F^2} + {y^2}}  \approx \sqrt {{F^2} + {y^2}}  \times \left[ {\frac{{yF\sin \theta }}{{{F^2} + {y^2}}}{\rm{ - }}\frac{{\rm{1}}}{{\rm{2}}}{{\left( {\frac{{yF\sin \theta }}{{{F^2} + {y^2}}}} \right)}^{\rm{2}}}} \right],
\vspace{-0.25cm}
\end{equation}
for further simplification, we obtain
\vspace{-0.25cm}
\begin{equation}\label{a33}    
\sqrt {{F^2} + {y^2} + 2yF\sin \theta } {\rm{ - }}\sqrt {{F^2} + {y^2}}  \approx  y\sin \theta  - \frac{{{{\left( {y\sin \theta } \right)}^2}}}{{2F}}.
\vspace{-0.25cm}
\end{equation}
Moreover, for the same reason, we have 
\vspace{-0.25cm}
\begin{equation}\label{a333}
\frac{1}{\sqrt {{F^2} + {y^2}+2yF\sin \theta }} = \frac{1}{F}{\left( {1 + \frac{{{y^2}}}{{{F^2}}} - \frac{{2y}}{F}\sin \theta } \right)^{ - \frac{1}{2}}} \approx \frac{1}{F}. 
\vspace{-0.25cm}
\end{equation}
Substituting \eqref{a1}-\eqref{a333} into \eqref{r1}, we have 
\vspace{-0.25cm}
\begin{equation}\label{a4} 
r(\theta ,d,\phi ) \approx \int\limits_{ - {{{D_y}} \mathord{\left/
			{\vphantom {{{D_y}} 2}} \right.
			\kern-\nulldelimiterspace} 2}}^{{{{D_y}} \mathord{\left/
			{\vphantom {{{D_y}} 2}} \right.
			\kern-\nulldelimiterspace} 2}} \frac{\lambda^2 }{16\pi^2dF}{e^{ - j{k_0}\left( {d - y\sin \phi  + {y^2}\frac{ \cos ^2\phi }{2d}} \right)}}
		e^{ - j\left(\phi _0 + k_0 y\sin\theta-\frac{k_0y^2\sin^2 \theta  }{2F} \right)} dy.
		\vspace{-0.25cm}
\end{equation}
Rewritten \eqref{a4}, we get
\vspace{-0.25cm}
\begin{equation}\label{a5} 
r(\theta ,d,\phi ) \approx \frac{\lambda^2 {e^{ - j\left( {{k_0}d + {\phi _0}} \right)}}}{16\pi^2dF} \int\limits_{ - {{{D_y}} \mathord{\left/
			{\vphantom {{{D_y}} 2}} \right.
			\kern-\nulldelimiterspace} 2}}^{{{{D_y}} \mathord{\left/
			{\vphantom {{{D_y}} 2}} \right.
			\kern-\nulldelimiterspace} 2}} {{e^{j{y^2}{k_0}\left( {\frac{{{{\sin }^{\rm{2}}}\theta }}{{{\rm{2}}F}} - \frac{{{{\cos }^2}\phi }}{{2d}}} \right)}}{e^{ - jy{k_0}\left( { \sin\theta   - \sin\phi } \right)}}} dy.
		\vspace{-0.25cm}
\end{equation}
Without loss of generality, we assume $\phi_0=2\pi$ for the first lens design. Since $\phi_0$ is common for all antenna elements, the phase term $e^{ - j\phi_0}$ can be ignored.
Denote $\alpha  = \frac{{\pi \sin^2 \theta }}{{\lambda F}} - \frac{{\pi \cos^2 \phi }}{{\lambda d}}$, and $\beta  =  ({{ \sin \theta - \sin \phi}})/{\lambda }$, we obtain 
\vspace{-0.25cm}
\begin{equation}\label{a6} 
r(\theta ,d,\phi ) \approx \frac{\lambda^2 {e^{ - j {{k_0}d} }}}{16\pi^2dF}\int\limits_{ - {{{D_y}} \mathord{\left/
			{\vphantom {{{D_y}} 2}} \right.
			\kern-\nulldelimiterspace} 2}}^{{{{D_y}} \mathord{\left/
			{\vphantom {{{D_y}} 2}} \right.
			\kern-\nulldelimiterspace} 2}} {{e^{j\alpha {y^2}}}{e^{ - j2\pi \beta y}}} dy.
		\vspace{-0.25cm}
\end{equation}

For Design 2, 
with $\lVert{\mathbf{c}_0}-\mathbf{p}\rVert=\sqrt {{F_0}^2 + {y^2}}$, we have 
\vspace{-0.25cm}
\begin{equation}\label{r2}
r(\theta ,d,\phi ) = \int\limits_{ - {{{D_y}} \mathord{\left/
			{\vphantom {{{D_y}} 2}} \right.
			\kern-\nulldelimiterspace} 2}}^{{{{D_y}} \mathord{\left/
			{\vphantom {{{D_y}} 2}} \right.
			\kern-\nulldelimiterspace} 2}} {\frac{\lambda^2 {e^{ - j{k_0}(\sqrt {{d^2} + {y^2} - 2dy\sin \phi }-\sqrt {{F_0}^2 + {y^2}}) }}{e^{ - j\left( {{\rm{ }}{\phi _0} + {k_0}\left( {\sqrt {{F^2} + {y^2} + 2yF\sin \theta } {\rm{ - }}\sqrt {{F^2} + {y^2}} } \right)} \right)}}}{{16\pi^2 \sqrt {{d^2} + {y^2} - 2dy\sin \phi }\ \sqrt {{F^2} + {y^2}+2yF\sin \theta } }}} dy.
		\vspace{-0.25cm}
\end{equation}
Similarly, without loss of generality, we assume $\phi_0-k_0F_0=2\pi$ for the second lens design. Since $\phi_0-k_0F_0$ is common for all antenna elements, the phase term $e^{ - j(\phi_0-k_0F_0)}$ can be ignored.
The received signal can have the same approximate expression as \eqref{a6} with $\alpha = \frac{{\pi {{\sin }^{\rm{2}}}\theta }}{{\lambda F}} - \frac{{\pi {{\cos }^{\rm{2}}}\phi }}{{\lambda d}} + \frac{\pi }{{\lambda {F_0}}}$, as summarized in Table \ref{af}.
Let ${J_a} = \int\limits_{ - {{{D_y}} \mathord{\left/
			{\vphantom {{{D_y}} 2}} \right.
			\kern-\nulldelimiterspace} 2}}^{{{{D_y}} \mathord{\left/
			{\vphantom {{{D_y}} 2}} \right.
			\kern-\nulldelimiterspace} 2}} {{e^{j\alpha {y^2}}}{e^{ - j2\pi \beta y}}dy}$, 
		we have
		\vspace{-0.25cm}
\begin{equation}\label{a8}	
{J_a} = \frac{{\sqrt \pi  }}{{{\rm{2}}\sqrt \alpha  }}{e^{ - j\left( {\frac{{{{\left( {2\pi \beta } \right)}^2}}}{{4\alpha }} - \frac{5\pi }{4}} \right)}}\left( {\mathrm{erf}\left( {\frac{{\alpha {D_y} + 2\pi \beta }}{{2\sqrt \alpha  }}{e^{j\frac{{3\pi }}{4}}}} \right) + \mathrm{erf}\left( {\frac{{\alpha {D_y} - 2\pi \beta }}{{2\sqrt \alpha  }}{e^{j\frac{{3\pi }}{4}}}} \right)} \right).
\vspace{-0.25cm}
\end{equation}	
Hence, we obtain
\vspace{-0.05cm}
\begin{equation}\label{a11}	
r(\theta ,d,\phi) \approx \frac{{\lambda^2 {e^{ - j{k_0d}}} }}{{16\pi^2 dF}}J_a.
\vspace{-0.25cm}
\end{equation}
Then, we define that the effective lens antenna array response
on point $\mathbf{b}=[F\cos\theta,-F\sin\theta]$ at the focal arc as $a(\theta,d,\phi)=r(\theta,d,\phi)\times{16\pi^2 dF}/({{\lambda^2 {e^{ - j{k_0 d}}} }})$.
It then follows from \eqref{a11} that we have 
\vspace{-0.25cm}
\begin{equation}\label{a12}	
a(\theta,d,\phi) \approx \frac{{\sqrt \pi  }}{{{\rm{2}}\sqrt \alpha  }}{e^{ - j\left( {\frac{{{{\left( {2\pi \beta } \right)}^2}}}{{4\alpha }} - \frac{5\pi }{4}} \right)}}\left( {\mathrm{erf}\left( {\frac{{\alpha {D_y} + 2\pi \beta }}{{2\sqrt \alpha  }}{e^{j\frac{{3\pi }}{4}}}} \right) + \mathrm{erf}\left( {\frac{{\alpha {D_y} - 2\pi \beta }}{{2\sqrt \alpha  }}{e^{j\frac{{3\pi }}{4}}}} \right)} \right),
\vspace{-0.25cm}
\end{equation}	
where $\beta={{(\sin \theta  - \sin \phi) }}/{\lambda }$ and $\alpha$ is given in Table \ref{af} for different lens designs.

\section{}\label{B}
In this section, we give the proof of Lemma 1.	
By substituting the definition of $\mathrm{erf}(x)$ given in \eqref{erf} into \eqref{response}, and after some manipulations, we have 
\vspace{-0.25cm}
	\begin{equation}\label{plane11}
	a(\theta,d,\phi )  =  - \left.{{{e^{j\frac{\pi }{4}}}\left( {\int\limits_0^{\left( {\frac{{{D_y}\sqrt \alpha }}{2} + \frac{{\pi \beta}}{{\sqrt \alpha }}} \right){e^{j\frac{{3\pi }}{4}}}} {{e^{ - {t^2}}}dt}  + \int\limits_0^{\left( {\frac{{{D_y}\sqrt \alpha }}{2} - \frac{{\pi \beta}}{{\sqrt \alpha }}} \right){e^{j\frac{{3\pi }}{4}}}} {{e^{ - {t^2}}}dt} } \right)}}\middle/ {{(\sqrt \alpha {e^{j\frac{{{{\left( {\pi \beta} \right)}^2}}}{\alpha}}})}}\right..
	\vspace{-0.25cm}
	\end{equation}	
The assumption of plane wave-front holds when $d \to \infty$ and $F_0 \to \infty$ (for the second lens design), and also with the assumption that $F\gg y$, we can assume that in the far-field $\alpha\rightarrow 0$. Let $x\triangleq \sqrt \alpha$, we have	
\vspace{-0.25cm}
	\begin{equation}\label{plane2}
	\lim \limits_{d, F_0 \to \infty}a(\theta,d,\phi ) =	\mathop {\lim }\limits_{x \to 0} {\rm{ - }}\left.{{{e^{j\frac{\pi }{4}}}\left( {\int\limits_0^{\left( {\frac{{{D_y}x}}{2} + \frac{{\pi \beta}}{x}} \right){e^{j\frac{{3\pi }}{4}}}} {{e^{ - {t^2}}}dt}  + \int\limits_0^{\left( {\frac{{{D_y}x}}{2} - \frac{{\pi \beta}}{x}} \right){e^{j\frac{{3\pi }}{4}}}} {{e^{ - {t^2}}}dt} } \right)}}\middle/{{x{e^{j\frac{{{{\left( {\pi \beta} \right)}^2}}}{{{x^2}}}}}}}\right..
	\vspace{-0.25cm}
	\end{equation}
	Utilizing the L'Hopital's rule, \eqref{plane2} can be simplified to
	\vspace{-0.25cm}
	\begin{equation}\label{plane3}
	\begin{array}{ll}
	&\mathop {\lim }\limits_{x \to 0}  - \frac{{{e^{j\pi }}\left( {\left( {\frac{{{D_y}}}{2} - \frac{{\pi \beta}}{{{x^2}}}} \right){e^{j\left( {\frac{{{D_y}^2{x^2}}}{4} + \frac{{{{\left( {\pi \beta} \right)}^2}}}{{{x^2}}} + {D_y}\pi \beta} \right)}} + \left( {\frac{{{D_y}}}{2} + \frac{{\pi \beta}}{{{x^2}}}} \right){e^{j\left( {\frac{{{D_y}^2{x^2}}}{4} + \frac{{{{\left( {\pi \beta} \right)}^2}}}{{{x^2}}} - {D_y}\pi \beta} \right)}}} \right)}}{{{e^{j\frac{{{{\left( {\pi \beta} \right)}^2}}}{{{x^2}}}}} - 2j\frac{{{{\left( {\pi \beta} \right)}^2}}}{{{x^2}}}{e^{j\frac{{{{\left( {\pi \beta} \right)}^2}}}{{{x^2}}}}}}}\\
	= &\mathop {\lim }\limits_{x \to 0}  - \frac{{{e^{j\pi }}\left( {\left( {\frac{{{D_y}}}{2}{x^2} - \pi \beta} \right){e^{j\left( {\frac{{{D_y}^2{x^2}}}{4} + {D_y}\pi \beta} \right)}} + \left( {\frac{{{D_y}}}{2}{x^2} + \pi \beta} \right){e^{j\left( {\frac{{{D_y}^2{x^2}}}{4} - {D_y}\pi \beta} \right)}}} \right)}}{{{x^2} - 2j{{\left( {\pi \beta} \right)}^2}}},
	\end{array}
	\vspace{-0.25cm}
	\end{equation}	
	and with ${x \to 0}$, we have 
	\vspace{-0.25cm}
	\begin{equation}\label{plane33}
	\mathop {\lim }\limits_{x \to 0}  - \frac{{{e^{j\pi }}\left( {\left( {\frac{{{D_y}}}{2}{x^2} - \pi \beta} \right){e^{j\left( {\frac{{{D_y}^2{x^2}}}{4} + {D_y}\pi \beta} \right)}} + \left( {\frac{{{D_y}}}{2}{x^2} + \pi \beta} \right){e^{j\left( {\frac{{{D_y}^2{x^2}}}{4} - {D_y}\pi \beta} \right)}}} \right)}}{{{x^2} - 2j{{\left( {\pi \beta} \right)}^2}}}= \frac{{{e^{j{D_y}\pi \beta}} - {e^{ - j{D_y}\pi \beta}}}}{{2j\pi \beta}},
	\vspace{-0.25cm}
	\end{equation}
	since $({{{e^{j{D_y}\pi \beta}} - {e^{ - j{D_y}\pi \beta}}}})/({{2j\pi \beta}}) = {D_y}\mathrm{sinc}\left({{{{\tilde{D}_y}}}\sin \theta  - {{{\tilde{D}_y}}}\sin \phi } \right)$,	
	obviously, we have 
	\vspace{-0.25cm}
	\begin{equation}\label{plane333}
	\lim \limits_{d, F_0 \to \infty}a(\theta,d,\phi ) ={D_y}\mathrm{sinc}\left({{{{\tilde{D}_y}}}\sin \theta  -{{{\tilde{D}_y}}}\sin \phi } \right).
	\vspace{-0.25cm}
	\end{equation}

\section{}\label{C}
In this section, we give the proof of Lemma 2.
To analyze the property of ${w(\theta ,d,\phi)}$ in \eqref{w1}, we treat $\alpha$ and $\beta$ as a continuous function of $\theta$.
We firstly review the property of the $\mathrm{erf}(x)$ defined in \eqref{erf}, 
where $\mathrm{erf}(0)=0$, $\mathrm{erf}(\infty)=1$, and $\mathrm{erf}(-x)=-\mathrm{erf}(x)$.
According to \eqref{w1}, ${w(\theta ,d,\phi)}$ is the sum of two $\mathrm{erf}$ functions, and the amplitude of ${w(\theta ,d,\phi)}$ is shown in Fig. \ref{fig:window2}, which is similar to a rectangular window function in the near-field. The edges of the window are determined by zero points of two $\mathrm{erf}$ functions, namely $v_1$ and $v_2$.
Take the second lens design as an example, a given received waveform is shown in the last two subfigures of Fig. \ref{fig:window2}, where $v_1$ is obtained when $\xi_1=0$, and similarly $v_2$ is obtained when $\xi_2=0$. Note that $\xi_1 = {\frac{{\alpha{D_y} + 2\pi \beta}}{{2\sqrt \alpha }}{e^{j\frac{{3\pi }}{4}}}}$ and $\xi_2 = {\frac{{\alpha{D_y} - 2\pi \beta}}{{2\sqrt \alpha }}{e^{j\frac{{3\pi }}{4}}}}$.
Denote $v\buildrel \Delta \over = \sin\theta$, where $v\in(-1,1)$.
Let $\xi_1 =0$, 
we have 
\vspace{-0.25cm}
\begin{equation}\label{w2}	
{v^2} + \frac{{2F}}{{{D_y}}}v + \left( {\frac{F}{{{F_0}}} - \frac{{2F\sin \phi }}{{{D_y}}} - \frac{{F{{\cos }^2}\phi }}{d}} \right) = 0.
\vspace{-0.25cm}
\end{equation}
Since $v\in(-1,1)$, the only solution of \eqref{w2} is
\vspace{-0.25cm}
\begin{equation}\label{w3}	
{v_1} = -{{\frac{{F}}{{{D_y}}} + \sqrt {{{\left( {\frac{{F}}{{{D_y}}}} \right)}^2} - \left( {\frac{F}{{{F_0}}} - \frac{{2F\sin \phi }}{{{D_y}}} - \frac{{F{{\cos }^2}\phi }}{d}} \right)} }}.
\vspace{-0.25cm}
\end{equation}
Similarly,
let $\xi_2 =0$,
we have
\vspace{-0.25cm}
\begin{equation}\label{w4}	
{v^2} - \frac{{2F}}{{{D_y}}}v + \left( {\frac{F}{{{F_0}}} + \frac{{2F\sin \phi }}{{{D_y}}} - \frac{{F{{\cos }^2}\phi }}{d}} \right) = 0.
\vspace{-0.25cm}
\end{equation}
Since $v\in(-1,1)$, the only solution of \eqref{w4} is
\vspace{-0.25cm}
\begin{equation}\label{w5}	
{v_{\rm{2}}} = {{  \frac{{F}}{{{D_y}}}{\rm{ - }}\sqrt {{{\left( {\frac{{F}}{{{D_y}}}} \right)}^2} - \left( {\frac{F}{{{F_0}}}{\rm{ + }}\frac{{2F\sin \phi }}{{{D_y}}} - \frac{{F{{\cos }^2}\phi }}{d}} \right)} }}.
\vspace{-0.25cm}
\end{equation}
According to \eqref{w3} and \eqref{w5}, we can further obtain the center and width of the focusing window.
Let $v_c$ denote the center of the focusing window, we have
\vspace{-0.25cm}
\begin{equation}\label{w6}	
{v_c} = \frac{{{v_1} + {v_2}}}{2}
 = \frac{{  \frac{{16F\sin \phi }}{{{D_y}}}}}{{\frac{{8F}}{{{D_y}}}\left( {\sqrt {1 - \left( {\frac{{{D_y}^2}}{{F{F_0}}} + \frac{{2{D_y}\sin \phi }}{F} - \frac{{{D_y}^2{{\cos }^2}\phi }}{{Fd}}} \right)}  + \sqrt {1 - \left( {\frac{{{D_y}^2}}{{F{F_0}}} - \frac{{2{D_y}\sin \phi }}{F} - \frac{{{D_y}^2{{\cos }^2}\phi }}{{Fd}}} \right)} } \right)}},
 \vspace{-0.25cm}
\end{equation}
since $F,F_0,d\gg D_y$, we obtain
\vspace{-0.25cm}
\begin{equation}\label{w66}	
{v_c} \approx \left.\left({{ \frac{{16F\sin \phi }}{{{D_y}}}}}\right)\middle /\left({{\frac{{16F}}{{{D_y}}}}}\right) \right. =  \sin \phi .
\vspace{-0.25cm}
\end{equation}
Let $\Delta v$ denote the width of the focusing window, we have
\vspace{-0.25cm}
\begin{equation}\label{w7}
\hspace{-0.25cm}	
\begin{array}{ll}
\Delta v \!\!\!\!\! &=\left| {{v_{\rm{1}}} - {v_{\rm{2}}}} \right|\\
&= {{\left| {\frac{{F}}{{{D_y}}}\left( {{\rm{1}}\! - \!\sqrt {{\rm{1}} - \left( {\frac{{{D_y}^2}}{{F{F_0}}} + \frac{{{\rm{2}}{D_y}\sin \phi }}{F} - \frac{{{D_y}^2{{\cos }^2}\phi }}{{Fd}}} \right)} } \right){\rm{ + }}\frac{{F}}{{{D_y}}}\left( {{\rm{1}} \! -\! \sqrt {{\rm{1}} - \left( {\frac{{{D_y}^2}}{{F{F_0}}} \! -\! \frac{{{\rm{2}}{D_y}\sin \phi }}{F} - \frac{{{D_y}^2{{\cos }^2}\phi }}{{Fd}}} \right)} } \right)} \right|}},
\end{array}
\vspace{-0.25cm}
\end{equation}
similarly, since $F,F_0,d\gg D_y$, we get
\vspace{-0.25cm}
\begin{equation}\label{w77}	
\begin{array}{ll}
\Delta v
 & \approx \dfrac{{{\left| {\frac{F}{{{D_y}}}\left( {\frac{{{D_y}^2}}{{F{F_0}}} + \frac{{{\rm{2}}{D_y}\sin \phi }}{F} - \frac{{{D_y}^2{{\cos }^2}\phi }}{{Fd}}} \right){\rm{ + }}\frac{F}{{{D_y}}}\left( {\frac{{{D_y}^2}}{{F{F_0}}} - \frac{{{\rm{2}}{D_y}\sin \phi }}{F} - \frac{{{D_y}^2{{\cos }^2}\phi }}{{Fd}}} \right)} \right|}}}{2}\\
&= {D_y}\left| {\dfrac{1}{{{F_0}}} - \dfrac{{{{\cos }^2}\phi }}{d}} \right|.
\end{array}
\vspace{-0.25cm}
\end{equation}
The same procedure can be applied to obtain the approximate center and width of the focusing window for the first lens design.

\vspace{-0.25cm}
\section{}\label{FIM}
For the $n$-th element in $\mathbf{a}(d_l,\phi_l)$, we have
\vspace{-0.25cm}
\begin{equation}\label{sv_simple}
a_n(d_l,\phi_l) = am_n(d_l,\phi_l)\times ph_n(d_l,\phi_l)\times w_n(d_l,\phi_l),
\vspace{-0.25cm}
\end{equation}
where $am_n(d_l,\phi_l)=
\frac{{\sqrt \pi  }}{{{\rm{2}}\sqrt \alpha }}$,
$ph_n(d_l,\phi_l) = 
e^{ - j\left( {\frac{{{{ {\pi^2 \beta^2} }}}}{{\alpha}} - \frac{5\pi }{4}} \right)}$, and $w_n(d_l,\phi_l)$ is the discrete ``window" function derived from \eqref{w1} by replacing $\sin \theta$ with $\sin \theta_n = {n}/{N}$ in $\alpha$ and $\beta$ for $n \in \{0,\pm 1, \ldots,\pm N\}$.
We simplify $am_n(d_l,\phi_l)$, $ph_n(d_l,\phi_l)$ and $w_n(d_l,\phi_l)$ as $am_n$, $ph_n$ and $w_n$.
Then, we have
\vspace{-0.25cm}
\begin{equation}\label{dad}
\dfrac{{\partial a_n(d_l,\phi_l)}}{{\partial d_l}}
= \dfrac{{\partial am_n}}{{\partial d_l}}\times ph_n\times w_n
+am_n\times \dfrac{{\partial ph_n}}{{\partial d_l}}\times w_n 
+am_n\times ph_n\times\dfrac{{\partial w_n}}{{\partial d_l}},
\vspace{-0.25cm}
\end{equation}
where
\vspace{-0.05cm}
\begin{equation}\label{damd}
\dfrac{{\partial am_n}}{{\partial d_l}}= \dfrac{\pi\sqrt{\pi}\cos^2\phi_l}{4\lambda d_l^2 \alpha \sqrt{\alpha}},
\vspace{-0.25cm}
\end{equation}
\vspace{-0.25cm}
\begin{equation}\label{dphd}
\dfrac{{\partial ph_n}}{{\partial d_l}}=e^{ - j\left( {\frac{{{{ {\pi^2 \beta^2} }}}}{{\alpha}} - \frac{5\pi }{4}} \right)}\dfrac{{{e^{j\frac{\pi }{2}}}{\pi ^3}{\beta ^2}{{\cos }^2}\phi_l }}{{\lambda {d_l^2}{\alpha ^2} }},
\vspace{-0.25cm}
\end{equation}
and
\vspace{-0.25cm}
\begin{equation}\label{dwd}
\dfrac{{\partial w_n}}{{\partial d_l}}=\dfrac{{\sqrt \pi {e^{j\frac{{3\pi }}{4}}} {{\cos }^2}\phi_l }}{{\lambda {d_l^2}\alpha   }}\left( {\zeta_1{e^{j\zeta^2_2 }} + \zeta_2{e^{j\zeta^2_1}}} \right),
\vspace{-0.4cm}
\end{equation}
where $\zeta_1=\frac{{\alpha{D_y} + 2\pi \beta}}{{2\sqrt \alpha }}$ and $\zeta_2=\frac{{\alpha{D_y} - 2\pi \beta}}{{2\sqrt \alpha }}$.
Similarly, $\dfrac{{\partial a_n(d_l,\phi_l)}}{{\partial \phi_l}}$ can be obtained with
\vspace{-0.25cm}
\begin{equation}\label{damp}
\dfrac{{\partial am_n}}{{\partial \phi_l}}= -\dfrac{
	\pi \sqrt\pi \sin(2\phi_l)}{4\lambda d_l \alpha\sqrt\alpha},
\vspace{-0.25cm}
\end{equation}
\vspace{-0.4cm}
\begin{equation}\label{dphp}
\dfrac{{\partial ph_n}}{{\partial \phi_l}}= e^{ - j\left( {\frac{{{{ {\pi^2 \beta^2} }}}}{{\alpha}} - \frac{5\pi }{4}} \right)} \left(\dfrac{2\pi^2{e^{j\frac{\pi }{2}}}\beta\cos\phi}{\lambda \alpha}
+ \dfrac{\pi^3{e^{j\frac{\pi }{2}}}\beta^2\sin(2\phi)}{\lambda d \alpha^2}\right),
\vspace{-0.4cm}
\end{equation}
and
\vspace{-0.1cm}
\begin{equation}\label{dwp}
\dfrac{{\partial w_n}}{{\partial \phi_l}}=\dfrac{{\sqrt \pi {e^{j\frac{{3\pi }}{4}}} {{\sin }}(2\phi_l) }}{{\lambda {d_l}\alpha   }}\left( {\zeta_1{e^{j\zeta^2_2 }} + \zeta_2{e^{j\zeta^2_1}}} \right)
+ \dfrac{{2\sqrt \pi {e^{j\frac{{3\pi }}{4}}} {{\cos }}\phi_l }}{{\lambda \sqrt \alpha   }} \left( {{e^{j\zeta_2^2 }} - {e^{j\zeta_1^2 }}} \right).
\vspace{-0.25cm}
\end{equation}
\end{appendices}

\vspace{-1cm}

\end{document}